\definecolor{light-gray}{gray}{0.87}
\newenvironment{malgorithm}{\begin{algorithm}\footnotesize\setlength{\parskip}{0.2pt}}{\end{algorithm}}
\newtheorem{theorem}{Theorem}[section]
\newtheorem{lemma}[theorem]{Lemma}
\newtheorem{remark}[theorem]{Remark}
\theoremstyle{definition}
\newtheorem{definition}[theorem]{Definition}
\journal{Journal of \LaTeX\ Templates}
\begin{document}

\begin{frontmatter}

\title{Order-Preserving Pattern Matching Indeterminate Strings}

\author[inesc]{Diogo Costa}
\author[inesc]{Luís M. S. Russo}
\author[inesc]{Rui Henriques}
\author[kyushu]{Hideo Bannai}
\author[inesc]{Alexandre P. Francisco}\cortext[mycorrespondingauthor]{Corresponding author}\ead{aplf@tecnico.ulisboa.pt}

\address[inesc]{INESC-ID and Instituto Superior Técnico, Universidade de Lisboa, Portugal}
\address[kyushu]{Department of Computer Science, Kyushu University, Japan}

\begin{abstract} 
\noindent Given an indeterminate string pattern $p$ and an indeterminate string text $t$, the problem of order-preserving pattern matching with character uncertainties ($\mu$OPPM) is to find all substrings of $t$ that satisfy one of the possible orderings defined by $p$. When the text and pattern are determinate strings, we are in the presence of the well-studied exact order-preserving pattern matching (OPPM) problem with diverse applications on time series analysis. Despite its relevance, the exact OPPM problem suffers from two major drawbacks: 1) the inability to deal with indetermination in the text, thus preventing the analysis of noisy time series; and 2) the inability to deal with indetermination in the pattern, thus imposing the strict satisfaction of the orders among all pattern positions. %

This paper provides the first polynomial algorithm to answer the $\mu$OPPM problem when indetermination is observed on the pattern or text. Given two strings with length $m$ and $O(r)$ uncertain characters per string position, we show that the $\mu$OPPM problem can be solved in $O(mr\lg r)$ time when one string is indeterminate and $r\in\mathbb{N}^+$. %
 Mappings into satisfiability problems are provided when indetermination is observed on both the pattern and the text, and results concerning the general problem complexity are presented as well, with $\mu$OPPM problem proved to be \textbf{NP}-hard in general.
\end{abstract}

%\subjclass{Theory of computation $\rightarrow$ Pattern matching\vskip -0.25cm}
\begin{keyword}
order-preserving pattern matching\sep indeterminate string analysis\sep generic pattern matching\sep satisfiability
%\MSC[2010] 00-01\sep  99-00
\end{keyword}

\end{frontmatter}

%\tableofcontents

%\linenumbers

%
%
%
%
%
%

%
\section{Introduction}
Given a pattern string $p$ and a text string $t$, the exact order preserving pattern matching (OPPM) problem is to find all substrings of $t$ with the same relative orders as $p$. The problem is applicable to strings with characters drawn from numeric or ordinal alphabets. Illustrating, given $p$=(1,5,3,3) and $t=(5,1,4,2,2,5,2,4)$, substring $t[1..4]=(1,4,2,2)$ is reported since it satisfies the character orders in $p$, $p[0]\le p[2]=p[3]\le p[1]$. %
Despite its relevance, the OPPM problem has limited potential since it prevents the specification of errors, uncertainties or don't care characters within the text. %

Indeterminate strings allow uncertainties between two or more characters per position. Given indeterminate strings $p$ and $t$, the problem of order preserving pattern matching uncertain text ($\mu$OPPM) is to find all substrings of $t$ with an assignment of values that satisfy the orders defined by $p$. For instance, let $p=(1,2|5,3,3)$ and $t=(5,0,1,2|1,2,5,2|3,3|4)$. The substrings $t[1..4]$ and $t[4..7]$ are reported since there is an assignment of values that preserve either $p[0]<p[1]<p[2]=p[3]$ or $p[0]<p[2]=p[3]<p[1]$ orderings: respectively $t[1..4]=(0,1,2,2)$ and $t[4..7]=(2,5,3,3)$. %

Order-preserving pattern matching captures the structural isomorphism of strings, therefore having a wide-range of relevant applications in the analysis of financial times series, musical sheets, physiological signals and biological sequences \cite{ge1998pattern,kim,henriques2014seven}. Uncertainties often occur across these domains. In this context, although the OPPM problem is already a relaxation of the traditional pattern matching problem, the need to further handle localized errors is essential to deal with noisy strings \cite{ruiphd}. For instance, given the stochasticity of gene regulation (or markets), the discovery of order-preserving patterns in gene expression (or financial) time series needs to account for uncertainties \cite{bicspam,ruispm}. Numerical indexes of amino-acids (representing physiochemical and biochemical properties) are subjected to errors difficulting the analysis of protein sequences \cite{kawashima2000aaindex}. Another example are ordinal strings obtained from the discretization of numerical strings, often having two uncertain characters in positions where the original values are near a discretization boundary \cite{ruiphd}. %

Let $m$ and $n$ be the length of the pattern $p$ and text $t$, respectively. The exact OPPM problem has a linear solution on the text length $O(n+m \lg m)$ based on the Knuth-Morris-Pratt algorithm \cite{kubica,kim,cho1}. Alternative algorithms for the OPPM problem have also been proposed \cite{cho2,belazzougui,Chhabra}. %
Contrasting with the large attention given to the resolution of the OPPM problem, to our knowledge there are no polynomial-time algorithms to solve the $\mu$OPPM problem. Naive algorithms for $\mu$OPPM assess all possible pattern and text assignments, bounded by $O(n r^{m})$ when considering up to $r$ uncertain characters per position.

This work proposes the first polynomial time algorithms able to answer the $\mu$OPPM problem. Accordingly, the contributions are organized as follows. First, we show that an indeterminate string of length $m$ order-preserving matches a determinate string with the same length in $O(mr\lg r)$ time based on their monotonic properties.
Second, and given two indeterminate strings with the same size, we provide a linear encoding of the $\mu$OPPM into a satisfiability formula with properties of interest.
Furthermore, we extend this encoding and we present results concerning the computational complexity of $\mu$OPPM problem variations, namely a proof of that the $\mu$OPPM problem is \textbf{NP}-hard in general.
Third, given a pattern and text strings with lengths $m$ and $n$, only one of them indeterminate, we show that the $\mu$OPPM problem can be solved in linear space and its average efficiency boosted under effective filtration procedures.

A preliminary version of this work was presented at the Annual Symposium on Combinatorial Pattern Matching (CPM)~\cite{cpm2018}.
In this paper, we revise previous results and we present new results concerning the computational complexity of $\mu$OPPM problem; Sections~\ref{sec:alternateoppm}, \ref{sec:reduction} and \ref{sec:open_problem} are new.

\section{Background}
Let $\Sigma$ be a totally ordered alphabet and an element of $\Sigma^{\ast}$ be a string. The length of a string $w$ is denoted by $|w|$. The empty string $\varepsilon$ is a string of length 0. For a string $w=xyz$, $x$, $y$ and $z$ are called a prefix, substring, and suffix of $w$, respectively. The $i$-th character of a string $w$ is denoted by $w[i]$ for each $0\le i< |w|$. For a string $w$ and integers $0\le i\le j< |w|$, $w[i..j]$ denotes the substring of $w$ from position $i$ to position $j$. For convenience, let $w[i..j]=\varepsilon$ when $i>j$. 

Given strings $x$ and $y$ with equal length $m$, $y$ is said to order-preserving against $x$ \cite{kubica}, denoted by $x\approx y$, if the orders between the characters of $x$ and $y$ are the same, i.e. $x[i]\le x[j] \Leftrightarrow y[i]\le y[j]$ for any $0\le i,j< m$. 
A non-empty pattern string $p$ is said to order-preserving match (\textit{op-match} in short) a non-empty text string $t$ if and only if there is a position $i$ in $t$ such that $p\approx t[i-|p|+1..i]$. The \textit{order-preserving pattern matching} (OPPM) problem is to find all such text positions.

\subsection{The Problem}
Given a totally ordered alphabet $\Sigma$, an indeterminate string is a sequence of disjunctive sets of characters $x[0]x[1]..x[n-1]$ where $x[i]\subseteq\Sigma$. Each position is given by $x[i]=\sigma_1..\sigma_r$ where $r\ge 1\wedge \sigma_i\in\Sigma$. %

Given an indeterminate string $x$, a \textit{valid assignment} $\$x$ is a (determinate) string with a single character at position $i$, denoted $\$x[i]$, contained in the $x[i]$ set of characters, i.e. $\$x[0]\in x[0]$, \ldots, $\$x[m-1]\in x[m-1]$. For instance, the indeterminate string $(1|3,3|4,2|3,1|2)$ has $2^4$ valid assignments. Given an indeterminate position $x[i]\subseteq\Sigma$, $\$x_j[i]$ is the $j^{th}$ ordered value of $x[i]$ (e.g. $\$x_0[i]$=1 for $x[i]=1|2$). Given an indeterminate string $x$, let a \textit{partially assigned} string $\S x$ be an indeterminate string with an arbitrary number of uncertain characters removed, i.e. $\S x[0]\subseteq x[0]$, \ldots, $\S x[m-1]\subseteq x[m-1]$. 

Given a determinate string $x$ of length $m$, an indeterminate string $y$ of equal length is said to be \textit{order-preserving} against $x$, identically denoted by $x\approx y$, if there is a valid assignment $\$y$ such that the relative orders of the characters in $x$ and $\$y$ are the same, i.e. $x[i]\le x[j] \Leftrightarrow \$y[i]\le 
\$y[j]$ for any $0\le i,j< m$. Given two indeterminate strings $x$ and $y$ with length $m$, $y$ preserves the orders of $x$, $x\approx y$, if exists $\$y$ in $y$ that respects the orders of a valid assignment $\$x$ in $x$.

A non-empty indeterminate pattern string $p$ is said to order-preserving match (\textit{op-match} in short) a non-empty indeterminate text string $t$ if and only if there is a position $i$ in $t$ such that $p\approx t[i-|p|+1..i]$. The problem of \textit{order-preserving pattern matching with character uncertainties} ($\mu$OPPM) problem is to find all such text positions.

To understand the complexity of the $\mu$OPPM problem, let us look to its solution from a naive stance yet considering state-of-the-art OPPM principles. The algorithmic proposal by Kubica et al.~\cite{kubica} is still up to this date the one providing a lowest bound, $O(n$+$q)$, where $q=m$ for alphabets of size $m^{O(1)}$ ($q=m\lg m$ otherwise). %
Given a determinate string $x$ of length $m$, an integer $i$ ($0\le i< m$) is said in the context of this work to be an \textit{order-preserving border} of $x$ if $x[0..i]\approx x[m-i+1..m]$. 
In this context, given a pattern string $p$, the orders between the characters of $p$ are used to linearly infer the order borders. The order borders can then be used within the Knuth-Morris-Pratt algorithm to find op-matches against a text string $t$ in linear time \cite{kubica}. 

Given a determinate string $p$  of length $m$ and an indeterminate string $t$ of length $n$, the previous approach is a direct candidate to the $\mu$OPPM problem by decomposing $t$ in all its possible assignments, $O(r^n)$. %
Since determinate assignments to $t$ are only relevant in the context of $m$-length windows, this approach can be improved to guarantee a maximum of $O(r^m)$ assignments at each text position. %
Despite its simplicity, this solution is bounded by $O(nr^m)$. This complexity is further increased when indetermination is also considered in the pattern, stressing the need for more efficient alternatives.\vskip -0.75cm\textcolor{white}{.}%

\subsection{Related work} 

The exact OPPM problem is well-studied in literature. Kubica et al. \cite{kubica}, Kim et al. \cite{kim} and Cho et al. \cite{cho1} presented linear time solutions on the text length by respectively combining order-borders, rank-based prefixes and grammars with the Knuth–Morris–Pratt (KMP) algorithm \cite{kpm}. Cho et al. \cite{cho2}, Belazzougui et al. \cite{belazzougui}, and Chhabra et al. \cite{Chhabra} presented $O(nm)$ algorithms that show a sublinear average complexity by either combining bad character heuristics with the Boyer–Moore algorithm \cite{boyermoore} or applying filtration strategies. Recently, Chhabra et al. \cite{Chhabra:2017:EOP:3129357.3129363} proposed further principles to solve OPPM using word-size packed string matching instructions to enhance efficiency.

In the context of numeric strings, multiple relaxations to the exact pattern matching problem have been pursued to guarantee that approximate matches are retrieved. %
In norm matching \cite{nm1,nm2,nm3,nm4}, matches between numeric strings occur if a given distance threshold $f(x,y)\le \theta$ is satisfied. In ($\delta$,$\gamma$)-matching \cite{dg1,dg2,dg3,dg4,dg5,dg6,dg7}, %
strings are matched if the maximum difference of the corresponding characters is at most $\delta$ and the sum of differences is at most $\gamma$. 

In the context of nominal strings, variants of the pattern matching task have also been extensively studied to allow for don't care symbols in the pattern \cite{indeterminate,r17,gpm}, transposition-invariant \cite{dg5}, parameterized matching \cite{pm1,pm2}, less than matching \cite{r6}, swapped matching \cite{sm1,sm2}, gaps \cite{gap1,gap2,gap3}, overlap matching \cite{r5}, and function matching \cite{fm1,fm2}. 

Despite the relevance of the aforementioned contributions to answer the exact order-preserving pattern matching and generic pattern matching, they cannot be straightforwardly extended to efficiently answer the $\mu$OPPM problem.

\section{On solving $\mu$OPPM}
\label{secSol1}
Section~\ref{uOPPM1} introduces the first efficient algorithm to solve the $\mu$OPPM problem when one string is indeterminate ($r\in\mathbb{N}^+$).
Section~\ref{2indsec} discusses the existence of efficient solvers when both strings are indeterminate. %
Section~\ref{sec:alternateoppm} introduces then a polynomial time algorithm for the Alternate-$\mu$OPPM as a subproblem of $\mu$OPPM where both strings may have indeterminate characters, but never in the same position.
Given the formulations proposed in Section~\ref{2indsec}, we hypothesize that op-matching indeterminate strings with an arbitrary number of uncertain characters per position ($r\in\mathbb{N}^+$) is in class \textbf{NPC}. 
Furthermore, we show in Section~\ref{sec:reduction} that the problem \{3,3\}-$\mu$OPPM, defined as the subproblem of $\mu$OPPM where both the pattern and the text have indeterminate characters in any position (although at least one position must have at least three indeterminate characters in both pattern and text), is \textbf{NP}-hard.
We still leave a gap in between these two groups, namely for the strings where there are \textit{at most two} indeterminate characters in both strings at the same position. It remains open whether or not this problem is \textbf{NP}-hard. 

\subsection{\textbf{$O(mr\lg r)$} time $\mu$OPPM when one string is indeterminate}
\label{uOPPM1}

Given a determinate string $x$ of length $m$, there is a well-defined permutation of positions, $\pi$, that specifies a non-monotonic ascending order of characters in $x$. %
For instance, given $x$=(1,4,3,1), then $x[0]=x[3] < x[2] < x[1]$ and $\pi=(0,3,2,1)$. %
Given a determinate string $y$ with the same length, $y$ op-matches $x$ if it $y$ satisfies the same $m$-1 orders. For instance, given $x=(1,4,3,1)$ and $y=(2,5,4,3)$, $x$ orders are not preserved in $y$ since $y[0]\mathbin{\textcolor{red}{\neq}}y[3]<y[2]<y[1]$.

The monotonic properties can be used to answer $\mu$OPPM when one string is indeterminate. Given an indeterminate string $y$, let $x_\pi$ and $y_\pi$ be the permuted strings in accordance with $\pi$ orders in $x$. To handle equality constraints, positions in $y_\pi$ with identical characters in $x_\pi$ can be intersected, producing a new string $y'_{\pi}$ with $s$ length ($s\le m$). Illustrating, given $x$=(4,1,4,2) and $y=(2|7, 2,7|8,1|4|8)$, then ${\pi}$=(1,3,0,2), $x_{\pi}$=(1,2,4,4), $y_{\pi}=(2,8|4|1,7|2,8|7)$ and $y'_{\pi}=(y_{\pi}[0],y_{\pi}[1],$ $y_{\pi}[2]\cap y_{\pi}[3])=(2,8|4|1,7)$. To handle monotonic inequalities, $y'_\pi[i]$ characters can be concatenated in descending order to compose $z=y'_\pi[0]y'_\pi[1]..y'_\pi[s]$ and the orders between $x$ and $y$ verified by testing if the longest increasing subsequence (LIS)~\cite{FREDMAN197529} of $z$ has $s$ length. In the given example, $z=(2,8,4,1,7)$, and the LIS of $z=(\textbf{2},8,\textbf{4},1,\textbf{7})$ is $w$=(2,4,7). Since $|w|=|y'_\pi|$=3, $y$ op-matches $x$. %

\begin{theorem}
Given a determinate string $x$ and an indeterminate string $y$, let $x_\pi$ and $y_\pi$ be the sorted strings in accordance with $\pi$ order of characters in $x$. Let the positions with equal characters in $x_\pi$ be intersected in $y_\pi$ to produce a new indeterminate string $y'_\pi$. Consider $z_i$ to be a string with $y'_\pi[i]$ characters in descending order and $z=z_1z_2..z_m$, then $|w|=|y'_\pi|$ if and only if $y\approx x$, where $w$ is a longest increasing subsequence in $z$.
\label{T1}	
\end{theorem}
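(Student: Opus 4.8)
The plan is to prove the biconditional by connecting three equivalent conditions: (i) $y \approx x$; (ii) there is a valid assignment $\$y'_\pi$ of $y'_\pi$ that is strictly increasing (as a sequence of $s = |y'_\pi|$ values); and (iii) the LIS of $z$ has length $s$. The construction of $\pi$, $x_\pi$, and the intersection step that builds $y'_\pi$ is designed precisely so that op-matching is equivalent to finding a strictly increasing assignment to $y'_\pi$, so most of the work is in making those reductions rigorous and then linking (ii) to (iii) via the block structure of $z$.

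\emph{Step 1: reduce $y \approx x$ to a strictly increasing assignment of $y'_\pi$.} First I would observe that, since $\pi$ is just a permutation of positions, $x \approx y$ holds iff $x_\pi \approx y_\pi$ (permuting both strings identically preserves all pairwise orders). By choice of $\pi$, the sequence $x_\pi$ is nondecreasing, and it is constant exactly on the blocks of positions that share a common value in $x$. A valid assignment $\$y_\pi$ preserves the orders of $x_\pi$ iff it is (a) constant on each such block and (b) strictly increasing across consecutive blocks. Condition (a) says that within a block the chosen characters must all be equal, which is possible iff the intersection of the corresponding $y_\pi[i]$ sets is nonempty — this is exactly what the construction of $y'_\pi$ records, and it also explains why $|w| = |y'_\pi| \le m$ can only hold when every such intersection is nonempty. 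Hence $y \approx x$ iff $y'_\pi$ admits a valid assignment $\$y'_\pi$ with $\$y'_\pi[1] < \$y'_\pi[2] < \dots < \$y'_\pi[s]$.

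\emph{Step 2: reduce the strictly increasing assignment to the LIS of $z$.} Now $z = z_1 z_2 \cdots z_s$ where each $z_i$ lists the characters of $y'_\pi[i]$ in descending order. I claim $y'_\pi$ has a strictly increasing valid assignment iff $z$ has an increasing subsequence of length $s$. For the forward direction, a strictly increasing assignment picks one value from each $z_i$, and since these picked values are increasing they form an increasing subsequence of $z$ of length $s$; being in distinct blocks, they automatically appear in the right left-to-right order in $z$. For the converse, suppose $w$ is an increasing subsequence of $z$ with $|w| = s$. Since $z$ has exactly $s$ blocks, $w$ must use exactly one element from each block $z_i$ — it cannot use two from the same block because the elements of $z_i$ are arranged in \emph{descending} order, so no increasing subsequence contains two of them. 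Reading off the element chosen from each block gives a strictly increasing valid assignment of $y'_\pi$. Finally, since any increasing subsequence of $z$ has length at most $s$ (one per block), $|w| = s$ is equivalent to the LIS of $z$ attaining length $s$. Chaining Steps 1 and 2 gives $|w| = |y'_\pi|$ iff $y \approx x$.

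\emph{Main obstacle.} The delicate point is Step 1, specifically handling the equality constraints correctly: one must argue both that collapsing each equal-value block of $x_\pi$ into a single intersected position in $y'_\pi$ loses no information (a global assignment exists iff the collapsed one does) and that the descending arrangement inside each $z_i$ is what forces an LIS to respect the "one value per block" requirement in Step 2. I also need to be careful that the indices match up — that the string $z = z_1 \cdots z_m$ in the statement is really $z_1 \cdots z_s$ after empty or singleton blocks are accounted for — and that strictness versus non-strictness of the increasing subsequence is treated consistently (the orders in $x$ include ties only inside blocks, which are already removed, so across blocks we genuinely need strict increase, matching the standard strictly-increasing notion of LIS on the block representatives). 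Once these bookkeeping issues are pinned down, the equivalence is a short combinatorial argument.
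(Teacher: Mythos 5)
Your proposal is correct and follows essentially the same route as the paper's own (much terser) argument: the intersection step handles the equality blocks of $x_\pi$, the descending arrangement inside each block forces any increasing subsequence to use at most one character per block, and an LIS of length $|y'_\pi|$ is therefore equivalent to a strictly increasing valid assignment, i.e.\ to $y\approx x$. Your write-up is simply a more explicit and careful version of that argument (including the $z_1\cdots z_s$ versus $z_1\cdots z_m$ indexing and the strict-versus-non-strict LIS point, which the paper glosses over), so no substantive difference to report.
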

\begin{proof}
\textbf{$(\Rightarrow)$} If the length of the longest increasing subsequence (LIS), $|w|$, equals the number of monotonic relations in $x$, $|y'_\pi|$, then $y\approx x$. By sorting characters in descending order per position, we guarantee that at most one character per position in $y'_\pi$ appears in the LIS (respecting monotonic orders in $x$ given $y'_\pi$ properties). By intersecting characters in positions of $y$ with identical characters in $x$, we guarantee the eligibility of characters satisfying equality orders in $x$, otherwise empty positions in $y'_\pi$ are observed and the LIS length is less than $|y'_\pi|$. %
\textbf{$(\Leftarrow)$} If $|w|<|y'_\pi|$, there is no assignment in $y$ that op-matches $x$ due to one of two reasons: 1) there are empty positions in $y'_\pi$ due to the inability to satisfy equalities in $x$, or 2) it is not possible to find a monotonically increasing assignment to $y'_\pi$ and, given the properties of $y'_\pi$, $y_\pi$ cannot preserve the orders of $x_\pi$.
\end{proof}
Solving the LIS task on a string of size $n$ is $O(n\lg n)$~\cite{FREDMAN197529} where $n=|z|=O(rm)$. In addition, set intersection operations are performed $O(m)$ times on sets with $O(r)$ size, which can be accomplished in $O(rm\lg r)$ time. As a result, the $\mu$OPPM problem with one indeterminate string can be solved in $O(rm\lg(rm))$. %

Given the fact that the candidate string for the LIS task has properties of interest, we can improve the complexity of this calculus (Theorem~\ref{timeLIS}) in accordance with Algorithm~\ref{algo1}.%
\begin{malgorithm}
\KwIn{determinate $x$, indeterminate $y$ ($|x|=|y|=m$)}
$\pi$ $\leftarrow$ sortedIndexes($x$)\tcp*{$O(m)$ if $|\Sigma|=m^{O(1)}$; $O(m\lg m)$ otherwise}
$x_\pi$ $\leftarrow$ permute($x$,$\pi$), $y_\pi$ $\leftarrow$ permute($y$,$\pi$)\tcp*{$O(m+mr)$}
$j$ $\leftarrow$ 0; $y'_\pi[0]$ $\leftarrow$ $\{y_\pi[0]\}$\;
\ForEach(\tcp*[f]{$O(mr\lg r)$}){$i$ $\in$ 1..$m$-1}{
   \lIf(\tcp*[f]{$O(r\lg r)$}){$x_\pi[i]=x_\pi[i$-$1]$}{$y'_\pi[j]$ $\leftarrow$ $y'_\pi[j]\cap \{y_\pi[i]\}$}
   \lElse{$j$ $\leftarrow$ $j$+1; $y'_\pi[j]$ $\leftarrow$ $\{y_\pi[i]\}$}
}
$s$ $\leftarrow$ $|y'_\pi|$, nextMin $\leftarrow$ -$\infty$\;
\ForEach(\tcp*[f]{$O(m r)$}){$i$ $\in$ 0..$s$-$1$}{
     nextMin $\leftarrow$ min$\{ a \mid a\in y'_\pi[i], a$$>$nextMin$\}$;\tcp*[f]{$O(r)$}\\
     \lIf{$\not\exists$ \emph{nextMin}}{\textbf{return} \textsf{false}}
}
\textbf{return} \textsf{true}\;
\caption{$O(mr\lg r)$ $\mu$OPPM algorithm with one indeterminate string.\label{algo1}}
\end{malgorithm}

\begin{theorem}
$\mu$OPPM two strings of length $m$, one being indeterminate, is in $O(mr\lg r)$ time, where $r\in\mathbb{N}+$. \label{timeLIS} %
\end{theorem}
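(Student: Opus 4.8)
The plan is to prove the theorem by a line-by-line analysis of Algorithm~\ref{algo1}, establishing first that it correctly decides whether $y \approx x$ and then that every line runs within the claimed budget. For correctness I will lean on Theorem~\ref{T1}, which already reduces the op-match test to checking whether a longest increasing subsequence of the derived string $z = z_1 \cdots z_s$ has length $s = |y'_\pi|$; what remains is to show that lines 7--11 decide exactly this without ever materialising $z$, and in linear time per block rather than $O(|z|\lg|z|)$.

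First I would verify that lines 1--6 build precisely the objects named in Theorem~\ref{T1}: \texttt{sortedIndexes} and \texttt{permute} yield $\pi$, $x_\pi$ and $y_\pi$, and the loop merges, by intersection, consecutive positions of $y_\pi$ whose $x_\pi$-values coincide, producing $y'_\pi$ of length $s$. Then I would exploit the structure of $z$: since each block $z_i$ lists the elements of $y'_\pi[i]$ in \emph{descending} order, any increasing subsequence of $z$ contains at most one element from each block, so the LIS of $z$ has length $s$ if and only if there exist values $a_0 \in y'_\pi[0], \dots, a_{s-1} \in y'_\pi[s-1]$ with $a_0 < a_1 < \cdots < a_{s-1}$, i.e.\ an increasing system of representatives for the blocks. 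Lines 7--11 compute the greedy such system: at step $i$ they set \texttt{nextMin} to the smallest element of $y'_\pi[i]$ strictly larger than the previous choice, and report failure exactly when no such element exists (in particular when some $y'_\pi[i]$ is empty after an intersection, which matches part $(\Leftarrow)$ of Theorem~\ref{T1}). The correctness of this greedy is a standard exchange argument: if $(b_0, \dots, b_{s-1})$ is any increasing system of representatives and $(a_0, \dots, a_{i-1})$ the greedy prefix, a short induction gives $a_{i-1} \le b_{i-1}$, so at step $i$ the element $b_i > b_{i-1} \ge a_{i-1}$ is a valid candidate for \texttt{nextMin}; hence the greedy never gets stuck before an arbitrary valid system would, and it produces a valid system precisely when one exists. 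Combined with Theorem~\ref{T1}, this shows the algorithm returns \textsf{true} iff $y \approx x$.

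For the running time I would account for the lines as annotated. Computing $\pi$ costs $O(m)$ by radix sort when $|\Sigma| = m^{O(1)}$ and $O(m\lg m)$ otherwise; permuting $x$ and $y$ copies $O(m)$ singletons and $O(m)$ sets of size $O(r)$, i.e.\ $O(mr)$. The loop of lines 4--6 runs $m-1$ times; each iteration either starts a new block in $O(r)$ or intersects the running set $y'_\pi[j]$ (size $O(r)$) with a new $O(r)$-size set, which costs $O(r\lg r)$ (sort and merge, keeping the result sorted), for $O(mr\lg r)$ overall---this is the dominant term. The loop of lines 7--11 runs $s \le m$ times, and each iteration scans $y'_\pi[i]$ once to find \texttt{nextMin} in $O(r)$, for $O(mr)$. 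Summing, the total is $O(mr\lg r)$, with $O(mr)$ working space, which is linear in the input size; since $r \in \mathbb{N}^+$ this is the stated bound.

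The main obstacle is not any single step but making the reduction chain airtight: the greedy in lines 7--11 never mentions $z$, so the proof must bridge ``greedy increasing system of representatives for the blocks $y'_\pi[i]$'' to ``$\mathrm{LIS}(z)$ has length $s$'' (via the descending-order-within-blocks property) and thence to Theorem~\ref{T1}; the exchange argument, while routine, is where the actual content lies. A secondary point of care is the complexity bookkeeping for the $m-1$ set intersections: one must argue that the $O(r\lg r)$ bound genuinely covers maintaining the running intersection in a form (e.g.\ sorted) where the next intersection stays within budget, so that the intersections total $O(mr\lg r)$ and do not silently blow up.
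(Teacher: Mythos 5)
Your proposal is correct and follows essentially the same route as the paper: the paper's proof likewise reads the bound off Algorithm~\ref{algo1}, with the $O(mr\lg r)$ set intersections (it cites adaptive set intersection rather than your sort-and-merge) as the dominant term and the post-intersection inequality test done greedily in $O(mr)$, improving the $O(mr\lg(mr))$ LIS bound of Theorem~\ref{T1}. You merely spell out what the paper leaves implicit, namely the exchange argument showing the greedy \texttt{nextMin} sweep is equivalent to the block-wise LIS test (and you share the paper's tacit assumption that the initial sorting step, $O(m\lg m)$ for general alphabets, is not charged against the stated bound).
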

\begin{proof} In accordance with Algorithm \ref{algo1}, $\mu$OPPM is bounded by the verification of equalities, $O(mr\lg r)$ \cite{demaine2000adaptive}. 
	Testing inequalities after set intersections can be linearly performed on the size of $y$, $O(mr)$ time, improving the $O(mr\lg(mr))$ bound given by the LIS calculus. %
\end{proof}

The analysis of Algorthim \ref{algo1} further reveals that the $\mu$OPPM problem with one indeterminate string requires linear space in the text length, $O(mr)$.

\subsection{$\mu$OPPM with indeterminate pattern and text}
\label{2indsec}

As indetermination in real-world strings is typically observed between pairs of characters \cite{ruiphd}, a key question is whether $\mu$OPPM on two indeterminate strings is in class \textbf{P} when $r=2$. %
To explore this possibility, new concepts need to be introduced. In OPPM research, character orders in a determinate string of length $m$ can be decomposed in 3 sequences with $m$ unit sets: 

\begin{comment}
conjunctions of pairwise inequalities \cite{kubica}. Illustrating, \small $(x[0]=x[3]\le x[2]\le x[1])\equiv\{x[0]\le x[3]$,$x[3]\le x[0]$,$x[0]\le x[2]$,$x[2]\le x[1]\}$. \normalsize 
%
%
These orders can be represented using two vectors of singleton or empty sets:
\end{comment}
%
\begin{definition} For $i=0,\ldots,m-1$:
	\begin{itemize}
		\item $\textit{Leq}_x[i]=\{\max\{ k \mid k < i, x[i]=x[k] \}\}$ ($\emptyset$ if there is no eligible $k$),
		\item $\textit{Lmax}_x[i]=\{\max\{\mathrm{argmax}_k\{ x[k] \mid k<i, x[i]>x[k] \}\}\}$ ($\emptyset$ if there is no eligible $k$),
		\item $\textit{Lmin}_x[i]=\{\max\{\mathrm{argmin}_k\{ x[k] \mid k<i, x[i]<x[k] \}\}\}$ ($\emptyset$ if there is no eligible $k$).
	\end{itemize} 
	\label{minmaxoriginal}
\end{definition}

\textit{Leq}, \textit{Lmax} and \textit{Lmin} capture $=$, $>$ and $<$ relationships between each character $x[i]$ in $x$ and the closest preceding character $x[k]$. %
These orders can be inferred in linear time for alphabets of size $m^{O(1)}$ and in $O(m \lg m)$ time for other alphabets by answering the ``all nearest smaller values'' task on the sorted indexes \cite{kubica}. Figure~\ref{simplekubica} depicts \textit{Leq}, \textit{Lmax} and \textit{Lmin} for $x=(1,4,3,1)$. %
Given determinate strings $x$ and $y$, $A=Leq_x[t+1]$, $B=Lmax_x[t+1]$ and $C=Lmin_x[t+1]$, if $x[0 . . t]\approx y[0 . . t]$, then $x[0 . . t + 1]\approx y[0 . . t + 1]$ if and only if
\begin{equation*}
\forall_{a\in A }\ ( y[t + 1] = y[a] ) \wedge \forall_{b\in B }\ ( y[t + 1] > y[b] ) \wedge \forall_{c\in C }\ ( y[t + 1] < y[c]).
\end{equation*}

\begin{figure}[t]
  \centering
	\begin{minipage}[b]{\linewidth}\centering
		\begin{tikzpicture}[xscale=1,yscale=1]
		\SetGraphUnit{1.5}
		\Vertex{1}\EA(1){4}\EA(4){3}\EA(3){1 }
		\tikzset{EdgeStyle/.append style={->, bend right = 40, color=purple}}
		\Edge[label = ](4)(1)
		\Edge[label = $<$](3)(1)
		\tikzset{EdgeStyle/.append style={->, bend left = 30, color=blue}}
		\Edge[label  = $>$ ](3)(4)
		\tikzset{EdgeStyle/.append style={->, bend left = 30, color=green}}
				\Edge[label = ${=}$ ](1 )(1)
		\tikzset{EdgeStyle/.append style = {bend left = 40, line width=1pt}}
		\end{tikzpicture}
	\end{minipage}\\
	\begin{minipage}[t]{\linewidth}\centering
		\begin{tabular}{lm{.35cm}m{.35cm}m{.35cm}m{.35cm}}\toprule
			Pattern&1&\hspace{0.1cm}4&\hspace{0.1cm}3&\hspace{0.1cm}1\\
			\midrule
			\mbox{\textit{Leq}$[i]$}&$\emptyset$&\hspace{0.1cm}$\emptyset$&\hspace{0.1cm}$\emptyset$&\{0\}\\
			Ordered indexes (asc)&0&\hspace{0.1cm}3&\hspace{0.1cm}2&\hspace{0.1cm}1\\
			\mbox{\textit{Lmax}$[i]$ (nearest asc smaller not in \textit{Leq}$[i]$)}&$\emptyset$&\{0\}&\{0\}&\hspace{0.1cm}$\emptyset$\\
			Ordered indexes (desc)&2&\hspace{0.1cm}0&\hspace{0.1cm}1&\hspace{0.1cm}3\\
			\mbox{\textit{Lmin}$[i]$ (nearest desc smaller not in \textit{Leq}$[i]$)}&$\emptyset$&\hspace{0.1cm}$\emptyset$&\{1\}&\hspace{0.1cm}$\emptyset$\\\bottomrule
		\end{tabular}
	\end{minipage}
	\caption{Orders identified for $p=(1,4,3,1)$ where \textit{Leq}, \textit{Lmax} and \textit{Lmin} are in accordance with Kubica et al.~\cite{kubica}.}
	\label{simplekubica}
\end{figure}

When allowing uncertainties between pairs of characters, previous research on the OPPM problem cannot be straightforwardly extended %
due to the need to trace $O(2^m)$ assignments on indeterminate strings. %

\begin{lemma}
	Given a determinate string $x$, an indeterminate string $y$, and the singleton sets $A=Leq_x[t+1]$, $B=Lmax_x[t+1]$ and $C=Lmin_x[t+1]$ containing a position in $\{0,\ldots,t\}$. If $x[0 . . t]\approx y[0 . . t]$ is verified on a specific assignment of $y$ characters, denoted $\S y$, then $x[0 . . t + 1]\approx y[0 . . t + 1]$ if and only if
\begin{equation*}
\begin{split}
	\exists_{\$y[t+1]\in\S y[t+1]}\ \forall_{a\in A}\ \exists_{\$y[a]\in\S y[a]}\ \forall_{b\in B}\ \exists_{\$y[b]\in\S y[b]}\ \forall_{c\in C}\ \exists_{\$y[c]\in\S y[c]} \\
	\$y[t+1]= \$y[a] \wedge \$y[t + 1] > \$y[b] \wedge \$y[t + 1] < \$y[c]
\end{split}
\end{equation*}
	\label{orderlemma}
\end{lemma}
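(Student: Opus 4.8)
The plan is to prove Lemma~\ref{orderlemma} by reducing it to the known determinate extension rule stated just above (the one using $Leq$, $Lmax$, $Lmin$), applied to a suitable valid assignment of $y$. The key observation is that the displayed quantifier block is, once unpacked, simply the statement that we can choose concrete values $\$y[t+1]\in\S y[t+1]$, $\$y[a]\in\S y[a]$, $\$y[b]\in\S y[b]$, $\$y[c]\in\S y[c]$ (one for each of the at-most-three positions named by $A,B,C$) such that $\$y[t+1]=\$y[a]$, $\$y[t+1]>\$y[b]$ and $\$y[t+1]<\$y[c]$ hold; since $A,B,C$ are singletons, there is exactly one $a$, one $b$, one $c$ to worry about, and the nested $\exists/\forall$ alternation collapses to a plain conjunction of three pointwise conditions.

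First I would fix the assignment $\S y$ of positions $0,\ldots,t$ witnessing $x[0..t]\approx y[0..t]$, so that $\$y[0..t]$ is a determinate string op-matching $x[0..t]$. For the forward direction ($\Leftarrow$), suppose the displayed condition holds; pick the witnessing values and extend $\$y$ to position $t+1$ by setting $\$y[t+1]$ to the witnessing choice (noting that if $a$, $b$ or $c$ already lies in $\{0,\ldots,t\}$, the value $\S y$ assigned there must be re-used consistently — this is exactly why the statement phrases the inner quantifiers as $\exists_{\$y[a]\in\S y[a]}$ rather than fixing $\$y[a]$, because $\S y[a]$ may still be a set if position $a$ was never pinned down). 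Then by the determinate extension rule, $\$y[0..t+1]$ op-matches $x[0..t+1]$, hence $y[0..t+1]\approx x[0..t+1]$. For the converse direction ($\Rightarrow$), if $x[0..t+1]\approx y[0..t+1]$ then some valid assignment $\$y$ of $y[0..t+1]$ op-matches $x[0..t+1]$; restricting to the first $t+1$ characters gives an op-match of $x[0..t]$, and the determinate rule forces the three pointwise relations on $\$y[t+1],\$y[a],\$y[b],\$y[c]$, which in turn exhibits the required witnesses inside $\S y[t+1],\S y[a],\S y[b],\S y[c]$ respectively. I would also dispatch the trivial cases where $A$, $B$ or $C$ is empty: then the corresponding universally quantified clause is vacuously true and drops out, matching the convention that an empty $Leq/Lmax/Lmin$ entry imposes no constraint.

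The one subtlety worth spelling out — and what I expect to be the main obstacle to a fully rigorous write-up — is the interaction between the fixed partial assignment $\S y$ and the freshly chosen value $\$y[t+1]$: we must be careful that a single coherent assignment of $y[0..t+1]$ is produced, i.e. that the choices made for $a,b,c$ (when any of them equals $t+1$, or when two of them coincide, or when one of them was left unassigned in $\S y$) are mutually consistent and consistent with $\S y$. Because $Leq_x[t+1]$, $Lmax_x[t+1]$, $Lmin_x[t+1]$ each point to a position strictly less than $t+1$, and at most one position is named in each, the only genuine overlaps are among $a,b,c$ themselves; but $x$ cannot simultaneously satisfy $x[t+1]=x[a]$, $x[t+1]>x[b]$ and $x[t+1]<x[c]$ with $a=b$ or $a=c$ or $b=c$, so in fact $a,b,c$ are pairwise distinct whenever the respective sets are nonempty, and no conflict arises. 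Once this bookkeeping is in place, the lemma follows immediately from the determinate extension rule applied to the assignment $\$y$.
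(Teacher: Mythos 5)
Your reduction is sound in substance, and it differs from the paper's proof mainly in how the ``only the nearest positions matter'' step is discharged. The paper does not invoke the determinate extension rule as a black box: its $(\Leftarrow)$ direction re-runs the underlying transitivity argument directly on the partial assignment --- for any $i\le t$ with $x[t+1]>x[i]$, the definition of \textit{Lmax} gives $x[b]\ge x[i]$, prefix order-isomorphism transfers this to an inequality between $\$y[b]$ and $\$y[i]$, and the assumed $\$y[t+1]>\$y[b]$ then yields $\$y[t+1]>\$y[i]$ (similarly for $<$ and $=$) --- while its $(\Rightarrow)$ direction coincides with yours. Your route, fixing the witnessing assignment of $y[0..t]$ and citing the determinate rule quoted just before Lemma~\ref{orderlemma} (from Kubica et al.~\cite{kubica}), is legitimate and arguably cleaner, since that rule is stated in the paper anyway and your proof avoids repeating its argument.

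The one place where your write-up goes astray is exactly the subtlety you flag and then dismiss. The coherence problem is not overlap among $a,b,c$ (those are indeed pairwise distinct when the corresponding sets are nonempty); it is agreement between the witnesses of the displayed condition and the values used at those same positions by the assignment witnessing $x[0..t]\approx y[0..t]$. If, as you explicitly allow, $\S y[b]$ may still contain several characters, the condition can be witnessed by a character of $y[b]$ different from the one the prefix witness uses, and then there is no single determinate string to feed into the determinate extension rule. Concretely, take $x=(1,2,3)$ and $y=(5,\,1|10,\,6)$ with $\S y=y$: the prefix op-match is verified via $(5,10)$, and the condition holds with $\$y[2]=6>\$y[1]=1$ since $B=\textit{Lmax}_x[2]=\{1\}$, yet $x[0..2]\approx y[0..2]$ fails. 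So your forward direction only goes through under the reading that $\S y[0..t]$ is the specific (determinate) assignment witnessing the prefix match, so that $\S y[a],\S y[b],\S y[c]$ are singletons and the inner existentials are vacuous; this is also the reading the paper's own proof tacitly uses (its transitivity chain likewise reuses ``the'' value $\$y[b]$ in two different roles). Make that restriction explicit and your argument is complete --- and no less rigorous than the paper's.
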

\begin{proof}
	\textbf{$(\Rightarrow)$} In accordance with \textit{Leq}, \textit{Lmax} and \textit{Lmin} definition, for any $a\in A$, $b\in B$ and $c\in C$ we have $x[t + 1]=x[a]$, $x[t + 1]>x[b]$ and $x[t +
1]<x[c]$. If there is an assignment to $y[0 . . t + 1]$ in $\S y$ that preserves the orders of $x[0 . . t + 1]$, then for each $a\in A$, $b\in B$ and $c\in C$ $\$y[t + 1]=\$y[a]$, $\$y[t + 1]>\$y[b]$ and $\$y[t +
1]<\$y[c]$ (where $\$y[t + 1]\in \S y[t+1]$, $\$y[a]\in \S y[a]$, $\$y[b]\in \S y[b]$, $\$y[c]\in \S y[c]$). 
\textbf{$(\Leftarrow)$} We need to show that $x[0 . . t + 1]\approx y[0 . . t + 1]$. Since $x[0 . . t]\approx y[0 . . t]$, for $i < t$, $\exists_{\$y[i]\in \S y[i], \$y[t + 1]\in \S y[t+1]}$:
$x[t+1]>x[i] \Leftrightarrow \$y[t+1]>\$y[i]$.
Assuming $x[t+1]>x[i]$ for some $i \in \{0,\ldots,t\}$: by the definition of \textit{Lmax}, $\forall_{b\in B}x[b]>x[i]$; by the order-isomorphism of $x[0 . . t]$ and $\$y[0 . . t]$ in $\S y[0 . . t]$, there is $\$y[i]\in \S y[i]$ and $\$y[b]\in \S y[b]$
that $\forall_{b\in B}\$y[b]>\$y[i]$; and by the assumption of the lemma, $\forall_{b\in B}\$y[t+1]>\$y[b]$; hence
$\$y[t+1]>\$y[i]$. Similarly, $x[t+1]<x[i]$ (and $x[t+1]=x[i]$) implies  $\$y[t+1]<\$y[i]$ (and $\$y[t+1]=\$y[i]$), yielding the stated equivalence. %
\end{proof}
Given two strings of equal length, the $\mu$OPPM problem can be schematically represented according to the identified order restrictions. Figure~\ref{ordersFig} represents restrictions on the indeterminate string $y=(2,4|5,3|5,1|2)$ in accordance with the observed orders in $x=(1,4,3,1)$. %
The left side edges are placed in accordance with Lemma~\ref{orderlemma} and capture assessments on the orders between pairs of characters. The right side edges capture incompatibilities detected after the assessments, i.e. pairs of characters that cannot be selected simultaneously (for instance, $y[0]=2$ and $y[3]=1$, or $y[1]=4$ and $y[2]=5$). For the given example, there are two valid assignments,  $\$y_1=(2,4,3,2)$ and $\$y_2=(2,5,3,2)$, that satisfy $x[0]=x[3]<x[2]<x[1]$, thus $y$ op-matches $x$.

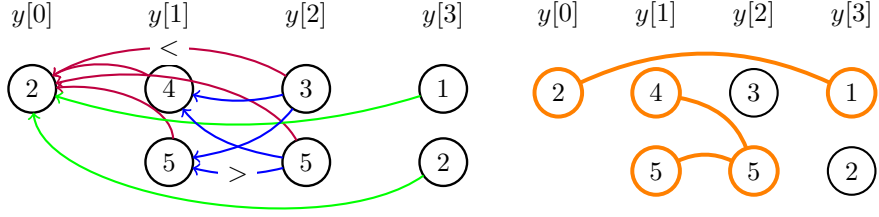
\begin{figure}[t]
	\begin{minipage}[b]{0.55\linewidth}\centering
		\begin{tikzpicture}[xscale=1.4,yscale=0.75]
		\SetGraphUnit{1.3}
		\SetVertexNormal[LineColor=white,LineWidth=2pt]
		\Vertex[L={$y[0]$}]{x1}\EA[L={$y[1]$}](x1){x2}\EA[L={$y[2]$}](x2){x3}\EA[L={$y[3]$}](x3){x4}
		\SetVertexNormal[LineColor=black,LineWidth=.9pt]
		\SO[L=2](x1){2a}
		\SO[L=4](x2){4a}\SO[L=5](4a){5a}
		\SO[L=3](x3){4b}\SO[L=5](4b){5b}
		\SO(x4){1}\SO[L=2](1){2b}
		\tikzset{EdgeStyle/.append style={->, bend right = 50, color=purple}}
		\Edge(4a)(2a)
		\Edge[label=$<$](4b)(2a)
		\Edge(5a)(2a)
		\Edge(5b)(2a)
		\tikzset{EdgeStyle/.append style={->, bend left = 70, color=green}}
		\Edge(2b)(2a)
		\tikzset{EdgeStyle/.append style={->, bend left = 30, color=green}}
		\Edge(1)(2a)
		\tikzset{EdgeStyle/.append style={->, bend left = 25, color=blue}}
		\Edge(4b)(4a)
		\Edge(4b)(5a)
		\Edge(5b)(4a)
		\Edge[label=$>$](5b)(5a)
		\end{tikzpicture}	
	\end{minipage}
	\quad
	\begin{minipage}[t]{0.35\linewidth}\centering\vskip -3.52cm
		\begin{tikzpicture}[xscale=1,yscale=0.8]
		\SetGraphUnit{1.3}
		\SetVertexNormal[LineColor=white,LineWidth=2pt]
		\Vertex[L={$y[0]$}]{x1}\EA[L={$y[1]$}](x1){x2}\EA[L={$y[2]$}](x2){x3}\EA[L={$y[3]$}](x3){x4}
		\SetVertexNormal[LineColor=orange,LineWidth=1.5pt]
		\SO[L=2](x1){2a}
		\SO(x4){1}
		\SO[L=4](x2){4a}
		\SO[L=5](4a){5a}
		\SetVertexNormal[LineColor=black,LineWidth=.8pt]		\SO[L=3](x3){4b}
		\SO[L=2](1){2b}
		\SetVertexNormal[LineColor=orange,LineWidth=1.5pt]
		\SO[L=5](4b){5b}
		\tikzset{EdgeStyle/.append style={-, bend right, color=orange, line width=1.5pt}}
		\Edge(1)(2a)
		\Edge(5b)(4a)
		\Edge(5b)(5a)
		\end{tikzpicture}
	\end{minipage}
	\vskip -0.5cm
	\caption{Schematic representation of the pairwise ordering restrictions for text $y$=($2,4|5,3|5,1|2$) and pattern $x$=(1,4,3,1). In the left side, all order verifications are represented, while in the right side only the order conflicts are signaled (e.g. $y[1]$=4 cannot be selected together with $y[2]$=5).}
	\label{ordersFig}
\end{figure}

To verify whether there is an assignment that satisfies the identified ordering restrictions, we propose the reduction of $\mu$OPPM problem to a Boolean satisfiability problem.

Given a set of Boolean variables, a formula in conjunctive normal form is a conjunction of clauses, where each clause is a disjunction of literals, and a literal corresponds to a variable or its negation. Let a 2CNF formula be a formula in the conjunctive normal form with at most two literals per clause. Given a CNF formula, the \textit{satisfiability} (SAT) problem is to verify if there is an assigning of values to the Boolean variables such that the CNF formula is satisfied.

\begin{theorem}
	The $\mu$OPPM problem over two strings of equal length, one being indeterminate, can be reduced to a satisfiability problem with the following CNF formula:%
\begin{equation}
\begin{split}
\phi= & \bigwedge_{i=0}^{m-1} \left(\bigvee_{\$y[i]\in y[i]} z_{i,\$y[i]}\right) \\
& \wedge  \bigwedge_{i=0}^{m-1} 
  \left(           \bigwedge_{\$y[i]\in y[i]}\bigwedge_{\substack{j\in  Leq[i]\\\$y[j]\in y[j]}} \left(\neg z_{i,\$y[i]} \vee \neg z_{j,\$y[j]} \vee \$y[i] = \$y[j] \right) \right.\\
& \qquad\ \ \wedge \bigwedge_{\$y[i]\in y[i]}\bigwedge_{\substack{j\in Lmax[i]\\\$y[j]\in y[j]}} \left(\neg z_{i,\$y[i]} \vee \neg z_{j,\$y[j]} \vee \$y[i] > \$y[j] \right) \\
& \qquad\ \ \wedge \left. 
                   \bigwedge_{\$y[i]\in y[i]}\bigwedge_{\substack{j\in Lmin[i]\\\$y[j]\in y[j]}} \left(\neg z_{i,\$y[i]} \vee \neg z_{j,\$y[j]} \vee \$y[i] < \$y[j] \right)\right) 
	\label{eqMapping}
  \end{split}
	\end{equation}
	\label{satsound}
\end{theorem}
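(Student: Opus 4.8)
The plan is to show that the CNF formula $\phi$ in Equation~\eqref{eqMapping} is satisfiable if and only if $y \approx x$, by establishing a bijective correspondence between satisfying assignments of $\phi$ and valid assignments $\$y$ of $y$ that op-match $x$. First I would fix notation: a truth assignment to the variables $z_{i,\$y[i]}$ encodes, for each position $i$, which character $\$y_j[i] \in y[i]$ is "selected". The first big conjunction $\bigwedge_i \bigvee_{\$y[i]\in y[i]} z_{i,\$y[i]}$ forces at least one character to be selected per position. I would note that $\phi$ does \emph{not} explicitly forbid selecting two characters at the same position, but argue this is harmless: given any satisfying assignment, we can restrict each position to a single selected character (say the smallest-indexed one with $z_{i,\cdot}$ true) without violating any clause, because the remaining clauses only \emph{constrain} pairs of simultaneously-selected characters and dropping selections can never create a violation. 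Hence without loss of generality a satisfying assignment induces a well-defined valid assignment $\$y$.

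Next I would handle the two directions. For $(\Leftarrow)$, suppose $y \approx x$, so there is a valid assignment $\$y$ with $x[i] \le x[j] \Leftrightarrow \$y[i] \le \$y[j]$ for all $i,j$. Set $z_{i,\$y[i]}$ true for the chosen characters and all other $z$-variables false. The first conjunction is immediately satisfied. For the second group, consider any clause, say from the $Leq$ block with $j \in Leq[i]$: the clause $(\neg z_{i,a} \vee \neg z_{j,b} \vee a=b)$ is trivially true unless $a$ and $b$ are both selected, i.e. $a = \$y[i]$ and $b = \$y[j]$; but $j \in Leq[i]$ means $x[i] = x[j]$, and order-isomorphism forces $\$y[i] = \$y[j]$, so the last disjunct $a = b$ holds. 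The $Lmax$ and $Lmin$ blocks are analogous using $x[i] > x[j]$ (resp. $x[i] < x[j]$). So $\phi$ is satisfied.

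For $(\Rightarrow)$, take a satisfying assignment, extract the induced valid assignment $\$y$ as above, and show $\$y$ op-matches $x$, i.e. $x[0..m-1] \approx \$y[0..m-1]$. Here I would invoke Lemma~\ref{orderlemma} (or rather its determinate specialization, since once $\$y$ is fixed the lemma's constraints become the plain Kubica-style conditions) and induct on the prefix length $t$: assuming $x[0..t] \approx \$y[0..t]$, the clauses indexed by $i = t+1$ in the three blocks encode exactly that $\$y[t+1] = \$y[a]$ for $a \in Leq[t+1]$, $\$y[t+1] > \$y[b]$ for $b \in Lmax[t+1]$, and $\$y[t+1] < \$y[c]$ for $c \in Lmin[t+1]$ — because the selected characters make all but the last disjunct false — which by the cited characterization (the displayed equivalence preceding Lemma~\ref{orderlemma}, applied to the determinate string $\$y$) extends the op-match to $x[0..t+1] \approx \$y[0..t+1]$. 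The base case $t=0$ is trivial.

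The main obstacle I anticipate is the bookkeeping around the "select exactly one per position" issue: the formula as written permits multiple selections, so I must argue cleanly that this does not break the equivalence — namely that pruning extra selections preserves satisfaction (monotonicity of the non-unit clauses under making literals false) and that the $Leq$ constraints, which could in principle force two distinct characters $a \ne b$ with $a = b$ (a contradiction) when both are selected, are precisely what the pruning avoids. A secondary subtlety is being careful that the literals "$\$y[i] = \$y[j]$", "$\$y[i] > \$y[j]$", "$\$y[i] < \$y[j]$" are treated as \emph{constants} (true or false depending on the fixed numeric values $a,b$), so that each clause is genuinely a 2-literal-or-fewer clause once these constants are evaluated — this is what makes the reduction land in (a mild generalization of) 2CNF and is worth stating explicitly, though strictly the theorem as phrased only claims a reduction to SAT.
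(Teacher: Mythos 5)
Your proposal is correct, and it establishes the same correspondence the paper intends (truth values of $z_{i,\$y[i]}$ encode character selections; the first block forces a selection per position; the remaining clauses reduce to pairwise conflicts $\neg z_{i,\cdot}\vee\neg z_{j,\cdot}$ whenever the numeric comparison evaluates to false). The difference is in how the hard direction is argued. The paper's own proof is essentially assertive: for unsatisfiability in the absence of an op-match it simply states that ``the conflicts prevent the satisfiability of one or more clauses,'' without explaining why constraints restricted to the singleton sets $\textit{Leq}[i]$, $\textit{Lmax}[i]$, $\textit{Lmin}[i]$ (rather than all pairs $i,j$) suffice, and without addressing that $\phi$ permits several true variables at one position. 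You prove the contrapositive form directly — satisfiable implies $y\approx x$ — by extracting one selected character per position and inducting on the prefix length using the determinate specialization of Lemma~\ref{orderlemma} (the Kubica-style equivalence displayed before it), which is exactly the missing justification that the local nearest-position constraints propagate to full order-isomorphism; and your observation that extra selections can be pruned because all non-unit clauses are monotone in falsifying literals (or, equivalently, that one can simply pick any true variable per position and read the comparisons off the satisfied clauses) cleanly disposes of the ``at least one versus exactly one'' issue. So your route buys rigor where the paper is terse, at the cost of the extra bookkeeping; the easy direction (choosing the op-matching assignment and setting exactly its variables true) coincides with the paper's. One minor remark: the induction base should be stated for a single position ($t=0$), which is trivially order-isomorphic, as you note; no other gaps.
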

\normalsize
\begin{proof}
		Let us show that if $x$ op-matches $y$ then $\phi$ is satisfiable, and if $x$ does not op-match $y$ then $\phi$ is not satisfiable. 
	$(\Rightarrow)$ When $x\approx y$, there is an assignment of values to $y$, $\$y$, that satisfy the orderings of $x$. $\phi$ is satisfiable if there is at least one variable assigned to true per clause $\vee_{\$y[i]\in y[i]}$ $z_{i,\$y[i]}$ given conflicts $\neg z_{i,\$y[i]}\vee\neg z_{j,\$y[j]}$. As conflicts do not prevent the existence of a valid assignment (by assumption), then $\exists_{\$y}\wedge_{i\in \{0..m-1\}} z_{i,\$y[i]}$ and $\phi$ is satisfiable. 
	$(\Leftarrow)$ When $x$ does not op-match $y$, there is no assignment of values $\$y\in y$ that can satisfy the orders of $x$. Per formulation, the conflicts $\neg z_{i,\$y[i]}\vee\neg z_{j,\$y[j]}$ prevent the satisfiability of one or more clauses $\vee_{\$y[i]\in y[i]}$ $z_{i,\$y[i]}$, leading to a non-satisfiable formula.
\end{proof}

If the established $\phi$ formula is satisfiable, there is a Boolean assignment to the variables that specify an assignment of characters in $y$, $\$y$, preserving the orders of $x$ (as defined by \textit{Leq}, \textit{Lmax} and \textit{Lmin}). Otherwise, it is not possible to select an assignment $\$y$ op-matching $x$. 
$\phi$ has at most $r\times m$ variables, $\{z_{i,\sigma} \mid i \in \{0..m-1\},\ \sigma\in\Sigma\}$. %
The Boolean value assigned to a variable $z_{i,\sigma}$ simply defines that the associated character $\sigma$ from $y[i]$ can be either considered (when \textsf{true}) or not (when \textsf{false}) to compose a valid assignment $\$y$ that op-matches the given determinate string $x$.
The reduced formula in \eqref{eqMapping} is composed of two major types of clauses: $\vee_{\$y[i]\in y[i]} z_{i,\$y[i]}$, and $(\neg z_{i,\$y[i]}\vee\neg z_{j,\$y[j]} \vee \textsf{bool})$ where \textsf{bool} is either given by $\$y[i] =\$y[j]$, $\$y[i] <\$y[j]$ or $\$y[i] > \$y[j]$. Clauses of the first type specify the need to select at least one character per position in $y$ to guarantee the presence of valid assignments. %
The remaining clauses specify ordering constraints between characters. If an inequality, such as $\$y[i] > \$y[j]$, is assessed as \textsf{true}, the associated clause is removed. Otherwise, $(\neg z_{i,\sigma_1}\vee\neg z_{j,\sigma_2})$ is derived, meaning that these $\sigma_1$ and $\sigma_2$ characters should not be selected simultaneously since they do not satisfy the orders defined by a given pattern. For instance, the pairs of characters in orange from Figure~\ref{ordersFig} should not be simultaneously selected due to order conflicts. To this end, $(\neg z_{0,2}\vee \neg z_{3,1})$ and $(\neg z_{1,4}\vee \neg z_{2,5})$ clauses need to be included to verify if $y\approx x$. 
Considering $y=(2,4|5,4|5,1|2)$ and $x=(1,4,3,1)$, schematically represented in Figure~\ref{ordersFig}, the associated CNF formula is:
\[
\phi = z_{0,2} \wedge (z_{1,4} \vee z_{1,5}) \wedge (z_{2,4} \vee z_{2,5}) \wedge (z_{3,1} \vee z_{3,2}) \wedge (\neg z_{0,2}\vee \neg z_{3,1} ) \wedge (\neg z_{1,4}\vee \neg z_{2,5} )
\]

\begin{comment}
\begin{malgorithm}
\KwIn{$t$ (indeterminate string of length $m$), \textit{Lmax} and \textit{Lmin}}
\KwOut{$\phi$ (2CNF formula)}
\BlankLine
$\phi$ $\leftarrow$ $\emptyset$\;
\textbf{Step 1}: guarantee that at least one character is selected per position\\
\ForEach{i $\in$ 0..m-1}{
$\phi$ $\leftarrow$ $\phi$ $\wedge$ $\forall$\;
}
\textbf{Step 2}: verify orders and add incompatibilities\\
\ForEach{i $\in$ 0..m-1}{
$\phi$ $\leftarrow$ $\phi$ $\wedge$ $\forall$\;
}
\BlankLine
\caption{Mapping the equal length $\mu$OPPM problem as a 2-satisfiability problem.}
\label{algoOrders}
\end{malgorithm}
\end{comment}

%
\begin{theorem}
	Given two strings of length $m$, one being indeterminate with $r=2$, the $\mu$OPPM problem can be reduced to a 2SAT problem with a CNF formula with $O(m)$ size.
	\label{2satsize}
\end{theorem}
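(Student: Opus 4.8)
The plan is not to build a new reduction but to observe that, once $r=2$, the CNF formula $\phi$ furnished by Theorem~\ref{satsound} is \emph{already} a 2CNF formula of linear size; the proof is therefore a classification of the clauses of \eqref{eqMapping} together with a counting argument. First I would split the clauses of $\phi$ into the two families that appear in \eqref{eqMapping}. The clauses of the first family, $\bigvee_{\$y[i]\in y[i]} z_{i,\$y[i]}$, are disjunctions indexed by the characters of $y[i]$; since $r=2$ each set $y[i]$ has at most two characters, so each such clause has at most two literals, and there are exactly $m$ of them.

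Next I would treat the second family, the clauses $(\neg z_{i,\$y[i]} \vee \neg z_{j,\$y[j]} \vee \textsf{bool})$ where $\textsf{bool}$ is one of $\$y[i]=\$y[j]$, $\$y[i]>\$y[j]$, $\$y[i]<\$y[j]$. The crucial point is that $\$y[i]$ and $\$y[j]$ denote \emph{fixed} characters of $\Sigma$, so each such comparison is a truth constant rather than a literal: if it evaluates to true the clause is trivially satisfied and may be deleted, and if it evaluates to false the clause simplifies to $(\neg z_{i,\$y[i]} \vee \neg z_{j,\$y[j]})$, which has exactly two literals. Hence, after this purely syntactic simplification, every clause of $\phi$ contains at most two literals, so $\phi$ is a 2CNF formula and op-matching reduces to 2SAT.

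It then remains to bound the size of $\phi$. By Definition~\ref{minmaxoriginal}, each of $\textit{Leq}_x[i]$, $\textit{Lmax}_x[i]$, $\textit{Lmin}_x[i]$ is a singleton or empty, so for each position $i$ at most three index pairs $(i,j)$ are generated; for each such pair the inner conjunctions range over $\$y[i]\in y[i]$ and $\$y[j]\in y[j]$, contributing at most $r^2=4$ clauses. Summing over $i$, the second family yields at most $12m$ clauses, so $\phi$ has $O(m)$ clauses; together with the at most $rm=O(m)$ variables $z_{i,\sigma}$, this gives total size $O(m)$, and soundness/completeness are inherited directly from Theorem~\ref{satsound}.

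The argument is essentially bookkeeping, so I do not expect a genuine obstacle; the one point that needs care is exactly the remark that the comparisons $\$y[i]$ versus $\$y[j]$ are \emph{evaluated} constants and not extra literals — this is what keeps the clauses binary, and it is precisely the property that breaks down once $r>2$ (the first-family clauses grow) or once $x$ rather than $y$ carries the indeterminacy (the comparisons become unresolved), which is consistent with the hardness results announced later in the paper.
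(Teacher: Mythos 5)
Your proof is correct and takes essentially the same route as the paper: it invokes Theorem~\ref{satsound}, notes that for $r=2$ both clause families of \eqref{eqMapping} have at most two literals (the comparison \textsf{bool} being an evaluated constant that either removes the clause or leaves a binary one), and counts $m$ selection clauses plus at most $12m$ ordering clauses, for $O(m)$ total size. Your explicit observation that the comparisons are evaluated constants rather than literals is stated only in the discussion surrounding the paper's Theorem~\ref{satsound} rather than inside its proof, but the argument is otherwise the same.
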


\begin{proof}
	Given Theorem~\ref{satsound} and the fact that the reduced CNF formula has at most two literals per clause -- $\phi$ is a composition of $\vee_{\$y[i]\in y[i]} z_{i,\$y[i]}$ clauses with $|y[i]|\in\{1,2\}$ and $(\neg z_{i,\$y[i]}\vee\neg z_{j,\$y[j]} \vee \textsf{bool})$ clauses -- $\mu$OPPM with $r=2$ and one indeterminate string is reducible to 2SAT. %
	The reduced formula has at most $10m$ clauses with 2 literals each, being linear in $m$:
	\begin{itemize}
		\item $ $[\textit{clauses that impose the selection of at least one character per position in $y$}] Since $y$ has $m$ positions, and each position is either determinate (unitary clause) or defines an uncertainty between a pair of characters, there are $m$ clauses and at most $2m$ literals;
		\item $ $[\textit{clauses that define the ordering restrictions between two variables}] A position in the indeterminate string $y[i]$ needs to satisfy at most two order relations. Considering that $i$, $Leq[i]$, $Lmax[i]$ and $Lmin[i]$ specify uncertainties between pairs of characters, there are up to 12 restrictions per position: 4 ordering restrictions between  characters in $y[i]$ and $y[Leq[i]]$, $y[Lmax[i]]$ and $y[Lmin[i]]$. Whenever the order between two characters is not satisfied, a clause is added per position, leading to at most $12m$ clauses.\vskip -0.5cm
	\end{itemize}
\end{proof}

\begin{theorem}
	The $\mu$OPPM between determinate and indeterminate strings of equal length can be solved in linear time when $r=2$.
	\label{time1}
\end{theorem}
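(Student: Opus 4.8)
The plan is to combine Theorem~\ref{2satsize} with the classical linear-time decision procedure for 2SAT. By Theorem~\ref{2satsize}, for a determinate string $x$ and an indeterminate string $y$ of the same length $m$ with at most $r=2$ characters per position, the $\mu$OPPM instance is equivalent to the satisfiability of the formula $\phi$ of~\eqref{eqMapping}, which is a 2CNF formula with $O(m)$ variables and $O(m)$ clauses. Since 2SAT is decidable in time linear in the formula size --- build the implication graph on the $2\cdot|\mathrm{vars}(\phi)|$ literals, compute its strongly connected components (e.g.\ with Tarjan's algorithm, following Aspvall, Plass and Tarjan), and report unsatisfiability exactly when some variable and its negation lie in the same component --- it suffices to show that $\phi$ itself can be produced from $x$ and $y$ in $O(m)$ time.

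First I would compute the order arrays $\textit{Leq}_x$, $\textit{Lmax}_x$ and $\textit{Lmin}_x$ of Definition~\ref{minmaxoriginal}. As noted after that definition, for an alphabet of size $m^{O(1)}$ these arrays are obtained in $O(m)$ time by sorting the indices of $x$ and answering the ``all nearest smaller values'' queries on the sorted order, exactly the preprocessing of Kubica et al. Next I would emit the clauses of $\phi$ in a single linear scan over the positions $i=0,\ldots,m-1$: for each $i$ one clause $\bigvee_{\$y[i]\in y[i]} z_{i,\$y[i]}$ of at most two literals (a unit clause when $y[i]$ is determinate), and for each of the at most three predecessors $j\in \textit{Leq}_x[i]\cup \textit{Lmax}_x[i]\cup \textit{Lmin}_x[i]$ a constant number of clauses $(\neg z_{i,\$y[i]}\vee\neg z_{j,\$y[j]}\vee \textsf{bool})$ ranging over the at most $2\times 2$ pairs $(\$y[i],\$y[j])$. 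Because each position contributes $O(1)$ clauses and the truth value of the arithmetic literal \textsf{bool} (one of $\$y[i]=\$y[j]$, $\$y[i]>\$y[j]$, $\$y[i]<\$y[j]$) is decided by a single comparison, each such clause is in $O(1)$ time either discarded (when \textsf{bool} holds) or recorded as a binary conflict $(\neg z_{i,\$y[i]}\vee\neg z_{j,\$y[j]})$. Hence $\phi$ is built in $O(m)$ time and, by Theorem~\ref{2satsize}, has $O(m)$ size.

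Correctness is inherited directly: by Theorem~\ref{satsound}, $\phi$ is satisfiable if and only if $x\approx y$, so running the linear-time 2SAT solver on $\phi$ answers the $\mu$OPPM instance, and the total running time is $O(m)+O(m)=O(m)$.

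The main obstacle is essentially bookkeeping: ensuring that the \emph{reduction}, not just the 2SAT call, runs in linear time. This rests on (i) the polynomial-alphabet assumption, so that the $\textit{Leq}/\textit{Lmax}/\textit{Lmin}$ arrays cost $O(m)$ rather than $O(m\lg m)$, and (ii) the fact that, with $r=2$, every position spawns only a constant number of clauses and every arithmetic comparison used to simplify a ternary clause into a binary conflict is a single operation. Were $r$ not constant, a position could generate $\Theta(r^2)$ conflict clauses and the bound would degrade to the $O(mr\lg r)$ of Theorem~\ref{timeLIS} at best; everything else in the argument is a direct instantiation of Theorems~\ref{satsound} and~\ref{2satsize} together with the standard SCC-based 2SAT algorithm.
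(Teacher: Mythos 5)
Your proof is correct and follows essentially the same route as the paper: invoke Theorem~\ref{2satsize} to get a 2CNF formula of size $O(m)$ equivalent to the $\mu$OPPM instance and then apply the linear-time 2SAT algorithm of Aspvall, Plass and Tarjan. Your additional bookkeeping showing that the formula itself can be emitted in $O(m)$ time (including the alphabet-size caveat for computing \textit{Leq}/\textit{Lmax}/\textit{Lmin}) is a more careful rendering of what the paper leaves implicit, not a different argument.
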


\begin{proof}
	Given the fact that a 2SAT problem can be solved in linear time \cite{aspvall1979linear}\footnote{2SAT problems have linear time and space solutions on the size of the input formula. Consider for instance the original proposal \cite{aspvall1979linear}, the formula $\phi$ is modeled by a directed graph $G=(V,E)$, with two nodes per variable $z_i$ in $\phi$ ($z_i$ and $\neg z_i$) and two directed edges for each clause $z_i\vee z_j$ (the equivalent implicative forms $\neg z_i\Rightarrow z_j$ and $\neg z_j\Rightarrow z_i$). Given $G$, the strongly connected components (SCCs) of $G$ can be discovered in $O(|V|+|E|)$. During the traversal if a variable and its complement belong to the same SCC, then the procedure stops as $\phi$ is determined to be unsatisfiable. Given the fact that both $|V|=O(m)$ and $|E|=O(m)$ by Lemma~\ref{2satsize}, this procedure is $O(m)$ time and space.}, this proof directly derives from Theorem~\ref{2satsize} as it guarantees the soundness of reducing $\mu$OPPM ($r=2$) to a 2SAT problem with a CNF formula with $O(m)$ size. 
\end{proof}

As the size of the mapped CNF formula $\phi$ is $O(m)$ and the a valid algorithm to verify its satisfiability would require the construction of a graph with $O(m)$ nodes and edges, the required memory for the target $\mu$OPPM problem is $\Theta(m)$.

%
%\label{space}
%

When moving from one to two indeterminate strings, previous contributions are insufficient to answer the $\mu$OPPM problem. 
In this context, the \textit{Leq}, \textit{Lmax} and \textit{Lmin} vectors need to be redefined to be inferred from an indeterminate string:
%
\begin{comment}
The linear time verification we proposed would need to be applied for the $O(2^m)$ possible assignments of the remaining string to answer the $\mu$OPPM problem. Furthermore, synergies between such $O(2^m)$ linear-time verifications are hard to explore since character uncertainties can largely affect the observed orderings. %
%
\end{comment}

\begin{definition}For $i=0,\ldots,m-1$:
	\begin{itemize}
		\item $\textit{Leq}_x[i|j] =\{k \mid k<i,\ \exists_p\ \$x_j[i]=\$x_p[k]\}$ ($\emptyset$ if there is no eligible $k$),
		\item $\textit{Lmax}_x[i|j]=\{k \mid k<i,\ \exists_p\ \$x_j[i]>\$x_p[k]\}$ ($\emptyset$ if there is no eligible $k$),
		\item $\textit{Lmin}_x[i|j]=\{k \mid k<i,\ \exists_p\ \$x_j[i]<\$x_p[k]\}$ ($\emptyset$ if there is no eligible $k$).
	\end{itemize} 
\label{lminmaxb}
\end{definition}

Figure~\ref{ordersFig2} schematically represents the order relationships of $x=(2,1|3,3)$ and the associated \textit{Leq}, \textit{Lmax} and \textit{Lmin} vectors. In this scenario, %
$x[2]$ needs to be verified not only against $x_0[1]$ but also against $x_1[1]$ in case $x_0[1]$ is disregarded. 

\begin{figure}[t]
	\centering
	\begin{minipage}[b]{\linewidth}\centering
		\begin{tikzpicture}[xscale=1.1,yscale=1.1]
		\SetGraphUnit{1.3}
		\SetVertexNormal[LineColor=white,LineWidth=2pt]
		\Vertex[L={$x[0]$}]{x1}\EA[L={$x_0[1]$}](x1){x2}\EA[L={$x_1[1]$}](x2){x3}\EA[L={$x[2]$}](x3){x4}%
		\SetVertexNormal[LineColor=black,LineWidth=.8pt]
		\SO[L={2}](x1){t1}
		\SO[L={1}](x2){t2}
		\SO[L={3}](x3){t3}
		\SO[L={3}](x4){t4}
		\tikzset{EdgeStyle/.append style={->, bend right = 45, color=purple}}
		\Edge[label=$<$](t4)(t1)
		\tikzset{EdgeStyle/.append style={->, bend right = 30, color=purple}}
		\Edge(t3)(t1)
		\Edge(t4)(t2)
				\tikzset{EdgeStyle/.append style={->, bend left = 50, color=green}}
		\Edge(t4)(t3)
		\tikzset{EdgeStyle/.append style={->, bend left = 50, color=blue}}
		\Edge[label=$>$](t2)(t1)
		\end{tikzpicture}	
	\end{minipage}
\\
\begin{minipage}[t]{\linewidth}\centering
		\begin{tabular}{lm{.35cm}m{.35cm}m{.35cm}m{0.70cm}}\toprule
		Pattern&2&1&3&3\\
		$i$ &0&1&1&2\\
		$j$&0&0&1&0\\\midrule
		\mbox{\textit{Leq}$[i|j]$}&$\emptyset$&$\emptyset$&$\emptyset$&\{1\}\\
		Ordered indexes (asc)&1&0&2&3\\
		\mbox{\textit{Lmax}$[i|j]$}&$\emptyset$&$\emptyset$&\{0\}&\{0,1\}\\
		Ordered indexes (desc)&2&3&0&1\\
		\mbox{\textit{Lmin}$[i|j]$}&$\emptyset$&\{0\}&$\emptyset$&$\emptyset$\\\bottomrule
	\end{tabular}
\end{minipage}%
	\caption{Order relationships of $x=(2,1|3,3)$ and the corresponding \textit{Leq}, \textit{Lmax} and \textit{Lmin} vectors.}
	\label{ordersFig2}
\end{figure}
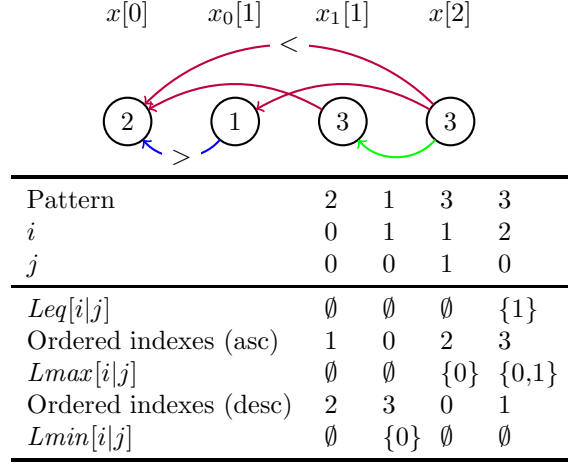

\begin{remark} Given \textit{Leq}, \textit{Lmax} and \textit{Lmin} (Definition~\ref{lminmaxb}), there are $O((rm)^2)$ order relationships when $r\in\mathbb{N}^+$ since each character in a given position establishes at most $O(m)$ relationships with characters in preceding positions. %
	\label{cor1}
\end{remark} 

\begin{lemma}
	\label{orderlemma2}
Given indeterminate strings $x$ and $y$, let $A_j=\textit{Leq}_x[t+1|j]$, $B_j=Lmax_x[t+1|j]$ and $C_j=Lmin_x[t+1|j]$ (Definition~\ref{lminmaxb}) be the orders associated with $\$x_j[t+1]$. If $x[1 . . t]\approx y[1 . . t]$ is verified on a partial assignment of $y$ characters, denoted by $\S y$, then $x[1 . . t + 1]\approx y[1 . . t + 1]$ if and only if
\begin{eqnarray*}
  \exists_{j\in\{0,1\}}\ \exists_{\$y[t+1]\in \S y[t+1]}\ \forall_{a\in A_j, b\in B_j, c\in C_j}
	\ \exists_{\$y[a]\in \S y[a], \$y[b]\in \S y[b], \$y[c]\in \S y[c]} \\
  \left( \$y[t+1]= \$y[a] \wedge \$y[t + 1] > \$y[b] \wedge \$y[t + 1] < \$y[c]\right)
\end{eqnarray*}
\end{lemma}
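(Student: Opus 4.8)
The plan is to prove Lemma~\ref{orderlemma2} by the same two-sided argument used for Lemma~\ref{orderlemma}, adapted to the fact that the pattern $x$ is now indeterminate. The only genuinely new structural ingredient is the outer quantifier $\exists_{j\in\{0,1\}}$: committing to a value $\$x_j[t+1]$ for position $t+1$ of $x$ selects, via Definition~\ref{lminmaxb}, the sets $A_j,B_j,C_j$ of earlier positions that $\$x_j[t+1]$ must be equal to, strictly above, or strictly below under \emph{some} completion of $x$; relative to that fixed choice the sets $A_j,B_j,C_j$ play exactly the role that the singletons $A,B,C$ played in Lemma~\ref{orderlemma}, except that they may now contain several indices. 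So the overall skeleton — split on the $=/>/<$ relation that $x[1..t]$ imposes between position $t+1$ and an arbitrary earlier position $i$, then transport that relation across the order-isomorphism between the assignments of $x$ and $y$ — carries over, and I would organize the proof to make this parallel explicit.

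For $(\Rightarrow)$: assume $x[1..t+1]\approx y[1..t+1]$, so there are a valid assignment $\$x$ of $x$ and an assignment $\$y$ of $y$ refining $\S y$ with the same relative orders. Let $j$ be the index with $\$x[t+1]=\$x_j[t+1]$ and let $\$y[t+1]$ be the corresponding character of $\S y[t+1]$. For any $a\in A_j$, $b\in B_j$, $c\in C_j$ I would first show that the \emph{chosen} witness $\$x$ already satisfies $\$x[t+1]=\$x[a]$, $\$x[t+1]>\$x[b]$, $\$x[t+1]<\$x[c]$: membership of $a$ in $A_j$ only asserts the equality for some completion, but since $x[1..t]\approx y[1..t]$ is fixed on $\S y$, all completions of $x[1..t]$ that remain order-isomorphic to $\$x[1..t]$ must agree on the order of position $a$ relative to the value $\$x_j[t+1]$ — otherwise two such completions would disagree on a pairwise order already pinned down, contradicting the hypothesis — so the relation witnessed by $A_j$ (resp. $B_j$, $C_j$) is exactly the relation realized by $\$x$. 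Transporting these three relations through the order-isomorphism of $\$x$ and $\$y$ then yields $\$y[t+1]=\$y[a]$, $\$y[t+1]>\$y[b]$, $\$y[t+1]<\$y[c]$ with $\$y[a]\in\S y[a]$, $\$y[b]\in\S y[b]$, $\$y[c]\in\S y[c]$ read off from $\$y$, which is the right-hand side.

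For $(\Leftarrow)$: assume $x[1..t]\approx y[1..t]$ on $\S y$ — fix the witness assignment of $x[1..t]$ and the isomorphic refinement of $\S y$ on $1..t$ — and assume the right-hand side holds for some $j$ and some $\$y[t+1]$. I would extend the assignment of $x$ by setting position $t+1$ to $\$x_j[t+1]$ and extend $\S y$ by selecting $\$y[t+1]$, and then verify the three families of pairwise orders between $t+1$ and an arbitrary earlier position $i$: if $x[1..t+1]$ forces $x[t+1]>x[i]$ (equivalently, by Definition~\ref{lminmaxb}, position $i$ lies at or below some element $b\in B_j$), combine $\$y[b]>\$y[i]$ from the assumed isomorphism on $1..t$ with $\$y[t+1]>\$y[b]$ from the right-hand side to get $\$y[t+1]>\$y[i]$; the $<$ case (via $C_j$) and the $=$ case (via $A_j$) are symmetric, exactly as in the $(\Leftarrow)$ direction of Lemma~\ref{orderlemma}, the only change being that $B_j,C_j,A_j$ now offer several possible intermediate witnesses and any one of them suffices. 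The main obstacle, and the point where I would spend the care, is precisely this: a single earlier position may simultaneously appear in two of $A_j,B_j,C_j$ (which cannot happen when the sets are singletons), and one must check that no such position induces an unsatisfiable conjunction of $y$-constraints — this is handled by showing that the hypothesis $x[1..t]\approx y[1..t]$ on $\S y$ already forces the value at that position to be consistent with the unique relation that the fixed completions realize, so that the $\forall$ over $A_j,B_j,C_j$ in the statement is in fact vacuous on the offending cases; assembling these local verifications gives a single consistent assignment of $y[1..t+1]$ witnessing $x[1..t+1]\approx y[1..t+1]$.
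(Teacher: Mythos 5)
Your overall route is the same as the paper's: both directions are run exactly as in Lemma~\ref{orderlemma}, with the singleton sets replaced by the conditional sets $A_j,B_j,C_j$ of Definition~\ref{lminmaxb}, the relations transported through the order-isomorphism, and transitivity through intermediate witnesses used for $(\Leftarrow)$. The paper's own proof is only a two-sentence sketch per direction, so you supply considerably more detail than it does, and you correctly isolate the two delicate points: the witnesses in Definition~\ref{lminmaxb} are existential over the characters of the earlier position, and a single earlier position may belong to several of $A_j,B_j,C_j$.

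However, your resolution of the first point in $(\Rightarrow)$ does not hold. You claim that, because $x[1..t]\approx y[1..t]$ is verified on $\S y$, every completion of $x[1..t]$ order-isomorphic to the fixed refinement of $\S y$ realizes the same order relation between position $a$ and the value $\$x_j[t+1]$, so that the chosen witness $\$x$ already satisfies $\$x[t+1]=\$x[a]$, $\$x[t+1]>\$x[b]$ and $\$x[t+1]<\$x[c]$ for all $a\in A_j$, $b\in B_j$, $c\in C_j$. But the order-isomorphism on the prefix only constrains the relative order of positions $1,\ldots,t$ among themselves; it says nothing about how a character of $x[k]$ compares with the value $\$x_j[t+1]$, which lies outside the prefix. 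Concretely, take $x=(1|3,\,2)$ and $y=(5,\,6)$ with $t+1=2$: here $A_0=\emptyset$ while $B_0=C_0=\{1\}$ (the character $1$ witnesses $2>1$ and the character $3$ witnesses $2<3$), the strings do op-match via $\$x=(1,2)$ and $\$y=(5,6)$, yet the right-hand side demands both $\$y[2]>\$y[b]$ and $\$y[2]<\$y[c]$ with witnesses drawn from the singleton $\S y[1]=\{5\}$, which is impossible. So the step ``the relation witnessed by $A_j$ (resp.\ $B_j$, $C_j$) is exactly the relation realized by $\$x$'' fails, and with it the forward implication as you argue it. To be fair, this is a defect of the statement under a literal reading of Definition~\ref{lminmaxb} --- the universal quantifier over $A_j,B_j,C_j$ ranges over relations that need not all be realized by any single valid assignment of $x$ --- and the paper's proof simply asserts the forward direction without confronting it; but an argument that derives the chosen witness's relations from the prefix isomorphism is not a valid repair, and you should either restrict the sets to the relations realized by the chosen completion of $x$ or flag the statement itself as needing amendment.
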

\begin{proof}
	$(\Rightarrow)$ Similar to the proof of Lemma~\ref{orderlemma}, yet $A$, $B$ and $C$ conditional to $x[t+1]$ (Definition~\ref{minmaxoriginal}) are now given by $A_j$, $B_j$ and $C_j$ conditional to $x_j[t+1]$ (Definition~\ref{lminmaxb}). If there is an assignment to $y[1 . . t + 1]$ in $\S y$ that preserves one of the possible orders in $x[1 . . t + 1]$, then for any $a\in A_j$, $b\in B_j$ and $c\in C_j$: $\$y[t + 1]=\$y[a]$, $\$y[t + 1]>\$y[b]$ and $\$y[t + 1]<\$y[c]$ (where $\$y[t + 1]\in \S y[t+1]$, $\$y[a]\in \S y[a]$, $\$y[b]\in \S y[b]$, and $\$y[c]\in \S y[c]$).

$(\Leftarrow)$ We need to show that $x[1 . . t + 1]\approx y[1 . . t + 1]$. Since $x[1 . . t]\approx y[1 . . t]$, it is sufficient to prove that for $i\le t$: exists $\$x[i]\in \S x[i]$, $\$x[t+1]\in \S x[t+1]$, $\$y[i]\in \S y[i]$, and $\$y[t + 1]\in \S y[t+1]$ such that
$\$x[t+1]=\$x[i] \Leftrightarrow \$y[t+1]=\$y[i]$, $\$x[t+1]>\$x[i] \Leftrightarrow \$y[t+1]>\$y[i]$ and $\$x[t+1]<\$x[i] \Leftrightarrow \$y[t+1]<\$y[i]$. This results from Definition~\ref{lminmaxb}, the order-isomorphism property and Lemma~\ref{orderlemma}.
\end{proof}

\begin{figure}[t]
	\centering
	\begin{tikzpicture}[xscale=1,yscale=.8]
	\SetGraphUnit{1.3}
	\SetVertexNormal[LineColor=white,LineWidth=2pt]
	\Vertex[L={$x[0]$}]{x1}\EA[L={$x_0[1]$}](x1){x2}\EA[L={$x_1[1]$}](x2){x3}\EA[L={$x[2]$}](x3){x4}
	\scriptsize
	\SetVertexNormal[LineColor=orange,LineWidth=1.5pt]
	\SO[L={$y[0]$=2}](x1){t1}
	\SO[L={$y[1]$=0}](x3){t3}
	\SO[L={$y_0[2]$=3}](x4){t4}
	\SO[L={$y_1[2]$=4}](1){t5}
	\SetVertexNormal[LineColor=black,LineWidth=.8pt]
	\SO[L={$y[1]$=0}](x2){t2}
	\tikzset{EdgeStyle/.append style={-, bend right = 40, color=orange, line width=1.5pt}}
	\Edge(t3)(t4)
	\Edge(t3)(t5)
	\tikzset{EdgeStyle/.append style={-, bend right = 50, color=orange, line width=1.5pt}}
	\Edge(t3)(t1)
	\end{tikzpicture}
	\caption{Conflicts when op-matching $y=(2,0,3|4)$ against $x=(2,1|3,3)$.}
	\label{ordersFig3}
\end{figure}
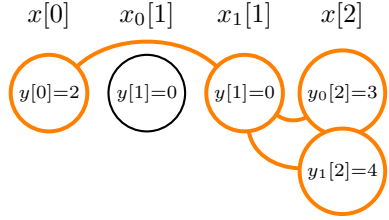

Figure~\ref{ordersFig3} represents encountered restrictions when op-matching $x=(2,1|3,3)$ against $y=(2,0,3|4)$. The right side edges capture the detected incompatibilities, i.e. pairs of characters that cannot be selected simultaneously. For the given example, there are 2 valid assignments -- $\$y_1=(2,0,3)$ and $\$y_2=(2,0,4)$ -- satisfying $\$x_0[1]<\$x_0[0]<\$x_0[2]$, thus $x\approx y$. 

To verify whether there is an assignment that satisfies the identified ordering restrictions, Theorem~\ref{2satsoundb} extends the previously introduced SAT mapping given by \eqref{eqMapping}.

\begin{theorem}
Given \textit{Leq}, \textit{Lmax} and \textit{Lmin} (Definition~\ref{lminmaxb}), $\mu$OPPM problem over two indeterminate strings of equal length can be reduced to a satisfiability problem with the following CNF formula:
\begin{equation}
\begin{split}
&\phi = \bigwedge_{i=0}^{m-1} \bigvee_{\substack{\$y[i]\in y[i]\\\$x[i]\in x[i]}} 
   z_{i,\$x[i],\$y[i]} \\
&\wedge \bigwedge_{i=0}^{m-1} \bigwedge_{\substack{\$y[i]\in y[i]\\\$x[i]\in x[i]}}\left(
   \bigwedge_{j\in Leq[i]}  \bigwedge_{\substack{\$y[j]\in y[j]\\\$x[j]\in x[j]}} \left(\neg z_{i,\$x[i],\$y[i]}\vee \neg z_{j,\$x[j],\$y[j]} \vee \$y[i] = \$y[j] \right) \right.\\
&\qquad\qquad\quad\ \ \wedge 
   \bigwedge_{j\in Lmax[i]} \bigwedge_{\substack{\$y[j]\in y[j]\\\$x[j]\in x[j]}} \left(\neg z_{i,\$x[i],\$y[i]}\vee \neg z_{j,\$x[j],\$y[j]} \vee \$y[i] > \$y[j] \right) \\
&\qquad\qquad\quad\ \ \wedge \left.
   \bigwedge_{j\in Lmin[i]} \bigwedge_{\substack{\$y[j]\in y[j]\\\$x[j]\in x[j]}} \left(\neg z_{i,\$x[i],\$y[i]}\vee \neg z_{j,\$x[j],\$y[j]} \vee \$y[i] < \$y[j] \right)\right)
\label{eqMappingb}
\end{split}
\end{equation}
\label{2satsoundb}
\end{theorem}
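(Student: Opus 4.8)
The plan is to mirror the proof of Theorem~\ref{satsound}, adapting it to the two-indeterminate-string setting where the encoding variables $z_{i,\$x[i],\$y[i]}$ now range over pairs consisting of a character choice in $x[i]$ and a character choice in $y[i]$. The statement to establish is a logical equivalence: $x$ op-matches $y$ if and only if the formula $\phi$ in \eqref{eqMappingb} is satisfiable. I would prove the two directions separately, exactly as in Theorem~\ref{satsound}, but now invoking Lemma~\ref{orderlemma2} (rather than Lemma~\ref{orderlemma}) to justify that the clauses built from \textit{Leq}, \textit{Lmax} and \textit{Lmin} of Definition~\ref{lminmaxb} faithfully capture all the pairwise order constraints that an op-match must respect.

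For the forward direction $(\Rightarrow)$, suppose $x\approx y$. By definition there are valid assignments $\$x\in x$ and $\$y\in y$ whose relative orders coincide. I would set $z_{i,\$x[i],\$y[i]}$ to \textsf{true} for the particular characters picked by $\$x$ and $\$y$ at each position $i$, and all other variables to \textsf{false}. Then each ``selection'' clause $\bigvee_{\$y[i]\in y[i],\,\$x[i]\in x[i]} z_{i,\$x[i],\$y[i]}$ is satisfied since exactly one of its literals is true. For the ``order'' clauses, a clause $(\neg z_{i,\$x[i],\$y[i]}\vee\neg z_{j,\$x[j],\$y[j]}\vee\textsf{bool})$ is trivially satisfied whenever either of its negative literals is true (i.e. whenever the pair $(i,j)$ does not correspond to the selected characters); and when both negative literals are false — meaning $\$x,\$y$ selected exactly those characters — the fact that $\$x$ and $\$y$ are order-isomorphic forces \textsf{bool} (which is $\$y[i]=\$y[j]$, $\$y[i]>\$y[j]$, or $\$y[i]<\$y[j]$ according to whether $j$ lies in $Leq[i]$, $Lmax[i]$, or $Lmin[i]$ for the chosen $\$x$) to be true, because $j\in Leq_x[i|\cdot]$ (resp. $Lmax$, $Lmin$) records that $\$x[j]$ equals (resp. is smaller than, larger than) the selected value of $\$x[i]$. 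Hence $\phi$ is satisfied.

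For the reverse direction $(\Leftarrow)$, suppose $\phi$ has a satisfying assignment. From the selection clauses, for each $i$ at least one variable $z_{i,\$x[i],\$y[i]}$ is true; fix one such choice per position, thereby defining candidate assignments $\$x$ and $\$y$. I then argue by induction on the prefix length $t$ that $x[0..t]\approx y[0..t]$, using Lemma~\ref{orderlemma2} for the inductive step: the order clauses involving position $t+1$ and the positions in $A_j=Leq_x[t+1|j]$, $B_j=Lmax_x[t+1|j]$, $C_j=Lmin_x[t+1|j]$ (with $j$ the chosen character index of $x[t+1]$), together with the fact that the corresponding \textsf{bool} literals must be true whenever the two negative literals are false, guarantee exactly the conjunction of equalities and inequalities that Lemma~\ref{orderlemma2} requires to extend the match; since the $\$x[i]$ and $\$y[i]$ fixed above witness the existential quantifiers in that lemma, the prefix match extends, and at $t=m-1$ we conclude $x\approx y$. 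Equivalently, one states the contrapositive as in Theorem~\ref{satsound}: if $x$ does not op-match $y$, then for every choice of one character per position in $x$ and $y$ some required order relation fails, so the conflict clause $(\neg z_{i,\$x[i],\$y[i]}\vee\neg z_{j,\$x[j],\$y[j]})$ (with its \textsf{bool} false) kills one of the selection clauses, making $\phi$ unsatisfiable.

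The main obstacle, and the point that needs the most care, is bookkeeping the fact that in Definition~\ref{lminmaxb} the sets $Leq[i|j]$, $Lmax[i|j]$, $Lmin[i|j]$ depend on \emph{which} character $j$ of the indeterminate position $x[i]$ is chosen, so the order clauses in \eqref{eqMappingb} are indexed jointly by the $x$-choice and the $y$-choice; I must check that the particular $\$x$ extracted from the satisfying assignment consistently uses the same index $j$ when referring to the ``$Leq[i]$'' etc. appearing (slightly abusively) in \eqref{eqMappingb}, and that Lemma~\ref{orderlemma2}'s ``$\exists_{j\in\{0,1\}}$'' (more generally $\exists_j$ over the characters of $x[t+1]$) lines up with this choice. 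Once that alignment is made explicit, the argument is a routine adaptation of Theorem~\ref{satsound}, so I would keep the written proof short and simply point to Lemma~\ref{orderlemma2} and Theorem~\ref{satsound} for the structural reasoning.
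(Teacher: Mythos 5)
Your overall structure mirrors the paper's proof: a witness-based argument for $(\Rightarrow)$ and an extraction argument for $(\Leftarrow)$, with your reverse direction (induction on the prefix via Lemma~\ref{orderlemma2}) actually spelled out more carefully than the paper's one-line contrapositive. The genuine gap is in the forward direction, at exactly the point you set aside as bookkeeping. You justify the order clauses by asserting that $j\in \textit{Leq}_x[i|\cdot]$ (resp.\ \textit{Lmax}, \textit{Lmin}) ``records that $\$x[j]$ equals (resp.\ is smaller than, larger than) the selected value of $\$x[i]$''. Under Definition~\ref{lminmaxb} this is not what membership means: $j$ belongs to the set as soon as \emph{some} character of $x[j]$ bears that relation to the chosen character of $x[i]$, and the character of $x[j]$ actually selected by $\$x$ need not be that witness. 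Since the clauses in \eqref{eqMappingb} are conjoined over \emph{all} $\$x[j]\in x[j]$, the formula as written imposes the comparison on $\$y[i]$ and $\$y[j]$ even for selections of $\$x[j]$ for which $\$x[i]$ does not bear the corresponding relation, so an op-matching pair of assignments can falsify a clause and your truth assignment no longer satisfies $\phi$.

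The paper's own running example makes this concrete: for $x=(2,1|3,3)$ and $y=(2,0,3|4)$ we have $1\in \textit{Leq}_x[2|0]$ (because $3\in x[1]$), so the literal reading of \eqref{eqMappingb} generates the conflicts $(\neg z_{2,3,3}\vee\neg z_{1,1,0})$ and $(\neg z_{2,3,4}\vee\neg z_{1,1,0})$, which eliminate precisely the witnessing assignments $\$x=(2,1,3)$ with $\$y=(2,0,3)$ or $(2,0,4)$, rendering $\phi$ unsatisfiable even though $x\approx y$; tellingly, the formula displayed after Theorem~\ref{2satsoundb} omits these clauses, i.e.\ the intended encoding only emits a conflict between $(i,\$x[i],\$y[i])$ and $(j,\$x[j],\$y[j])$ when those two specific character pairs fail to be order-isomorphic. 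So the $x$-side/$j$-side alignment is not routine: without restating the encoding so that each order clause is conditioned on the relation between the \emph{chosen} $\$x[i]$ and $\$x[j]$, your $(\Rightarrow)$ step fails for the formula as literally written; with that restatement both your $(\Rightarrow)$ argument and your $(\Leftarrow)$ induction via Lemma~\ref{orderlemma2} go through, essentially as in the paper's (terse) proof of Theorem~\ref{satsound}.
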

\begin{proof}
		If $x\approx y$ then $\phi$ is satisfiable, and if $x$ does not op-match $y$ then $\phi$ is not satisfiable. 

	$(\Rightarrow)$ When $x$ op-matches $y$, there is an assignment of values in $x$ and $y$ such that $\$x\approx\$y$. $\phi$ is satisfiable if there is at least one valid assignment $z_{i,\$x[i],\$y[i]}$ per $i^{\textit{th}}$ position. As conflicts $\neg z_{i,\$x[i],\$y[i]}\vee\neg z_{j,\$x[j],\$y[j]}$ do not prevent the existence of a valid assignment (by assumption), one or more variables $z_{i,\$x[i],\$y[i]}$ can be selected per position. $\phi$ can then be satisfied by fixing a single variable $z_{i,\$x[i],\$y[i]}$ per $i^{\textit{th}}$ position as \textsf{true} and the remaining variables as \textsf{false}. %
	$(\Leftarrow)$ When $x$ does not op-match $y$, there is no assignment of values $\$x\in x$ and $\$y\in y$ such that $\$x\approx\$y$. %
	Per formulation, in the absence of an order-preserving match, conflicts will prevent the assignment of at least one variable $z_{i,\$x[i],\$y[i]}$ per $i^{\textit{th}}$ position, thus making $\phi$ formula unsat. %
	\end{proof}

If the formula in \eqref{eqMappingb} is satisfiable, there is a Boolean assignment to the variables such that there is an assignment of characters in $y$, $\$y$, and in $x$, $\$x$, such that both strings op-match. Otherwise, it is not possible to select assignments such that $x\approx y$. 
Given $r=2$, the established $\phi$ formula has at most %
$4m$ variables, $\{z_{i,\sigma_1,\sigma_2} \mid i \in \{0\ldots m-1\}$, $\sigma_1,\sigma_2\in\Sigma$\}. %
The Boolean values assigned to these variables %
define whether characters $\sigma_1\in x[i]$ and $\sigma_2\in y[i]$ belong to an op-match. %
The reduced formula is composed of two major types of clauses:
\begin{itemize}
\item Those in the first line of \eqref{eqMappingb} ensure that at least one combination of characters, $\$x[i]$ and $\$y[i]$, should be selected per $i^{\textit{th}}$ position.
\item Remaining ones in \eqref{eqMappingb} specify ordering constraints between pairs of characters $\sigma_1\in y[i]$ and $y[Leq[i]]$, $y[Lmax[i]]$ and $y[Lmin[i]]$; if the inequalities $\$y[i]=\$y[j]$, $\$y[i]>\$y[j]$ and $\$y[i]<\$y[j]$ are assessed as \textsf{false}, then it leads to clauses of the form $(\neg z_{i,\sigma_1}\vee\neg z_{j,\sigma_2})$, meaning that these characters should not be selected simultaneously in the given positions (see Figure~\ref{ordersFig3}). %
\end{itemize}

To instantiate the proposed mapping, consider $x=(2,1|3,3)$ and $y=(2,0,3|4)$, schematically represented in Figure~\ref{ordersFig2}. The associated CNF formula is:
\begin{equation*}
\begin{split}
\phi & =  z_{0,2,2}\wedge(z_{1,1,0} \vee z_{1,3,0})\wedge (z_{2,3,3} \vee z_{2,3,4}) \\ %\textit{//(\ref{eqMappingb}.1)}
& \wedge (\neg z_{0,2,0} \vee \neg z_{1,3,0}) \wedge (\neg z_{1,3,0} \vee \neg z_{2,3,3})\wedge (\neg z_{1,3,0} \vee \neg z_{2,3,4}) %\textit{//\ref{eqMappingb}.2}
\end{split}
\end{equation*}

\begin{theorem}
	The $\mu$OPPM problem for two indeterminate strings of equal length is reducible into a satisfiability problem over a CNF formula with size $O((mr)^2)$.%
	\label{2satsizeb}
\end{theorem}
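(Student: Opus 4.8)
The plan is to bound the size of the CNF formula $\phi$ exhibited in Theorem~\ref{2satsoundb} (equation~\eqref{eqMappingb}) directly, counting separately its Boolean variables, the ``selection'' clauses of the first line of~\eqref{eqMappingb}, and the ``ordering'' clauses of the remaining three lines. Throughout, $r$ denotes the maximum number of uncertain characters per position, so $|x[i]|\le r$ and $|y[i]|\le r$ for every $i$.

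First I would count the variables. A variable $z_{i,\$x[i],\$y[i]}$ is introduced only for a position $i\in\{0,\ldots,m-1\}$ together with a choice $\$x[i]\in x[i]$ and $\$y[i]\in y[i]$; since there are at most $r$ choices for each of the two characters, each position contributes at most $r^2$ variables, for a total of $O(mr^2)$, which lies within $O((mr)^2)$.

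Next I would count the clauses. The first line of~\eqref{eqMappingb} contributes exactly one clause per position --- a disjunction over the $\le r^2$ variables of that position --- hence $m$ clauses of bounded width. The ordering clauses are produced, for each position $i$, by ranging over the $O(r^2)$ pairs $(\$x[i],\$y[i])$, over each predecessor position $j$ recorded in $\textit{Leq}[i]\cup\textit{Lmax}[i]\cup\textit{Lmin}[i]$, and over the $O(r^2)$ pairs $(\$x[j],\$y[j])$, emitting at most one clause of width at most three per relation type ($=$, $>$, $<$). To turn this enumeration into the stated bound I would invoke Remark~\ref{cor1}, by which the total number of order relationships recorded by $\textit{Leq}$, $\textit{Lmax}$ and $\textit{Lmin}$ across all positions is $O((mr)^2)$. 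Charging each ordering clause against the order relationship it encodes then bounds their number, and adding the $O(mr^2)$ variables together with the $m$ bounded-width selection clauses keeps the total size of $\phi$ within $O((mr)^2)$.

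I expect the only step requiring care to be this last count: one must avoid bounding the ordering clauses by the wasteful product ``$m$ positions $\times$ $O(m)$ predecessors $\times$ $O(r^2)$ endpoint choices'' and instead amortize them against the $O((mr)^2)$ order relationships guaranteed by Remark~\ref{cor1}. Checking that every generated clause has at most three literals, and that the construction is otherwise linear in the number of recorded relationships, is then routine bookkeeping.
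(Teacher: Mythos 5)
Your plan follows the paper's own route: bound the selection clauses in the first line of \eqref{eqMappingb} and then bound the ordering clauses by appealing to Remark~\ref{cor1}; the explicit count of $O(mr^2)$ variables is a harmless addition. The divergence is exactly at the step you yourself flag as the crux, and there your fix does not do what you claim. The charge of ordering clauses to the order relationships of Remark~\ref{cor1} is not (close to) injective: that remark counts relations attached to characters of $x$ only, whereas each ordering clause of \eqref{eqMappingb} is indexed \emph{in addition} by the choices $\$y[i]\in y[i]$ and $\$y[j]\in y[j]$, so a single relationship of Remark~\ref{cor1} carries up to $r^2$ distinct clauses (the variables $z_{i,\$x[i],\$y[i]}$ and $z_{j,\$x[j],\$y[j]}$ differ for each such choice, so these clauses cannot be merged). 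Your own enumeration --- $m$ positions, $O(r^2)$ pairs at $i$, $O(m)$ predecessors, $O(r^2)$ pairs at $j$ --- therefore gives $O(m^2r^4)$, and the amortization does not bring it down to $O((mr)^2)$. Note also that the ``wasteful product'' you warn against, $m\cdot O(m)\cdot O(r^2)$, is itself $O((mr)^2)$, i.e.\ exactly the target bound, so the caution is aimed at the wrong factor; the troublesome factor is the second $r^2$ coming from the endpoint choices at $j$ (equivalently, the $y$-choices at both endpoints).

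To be fair, the paper's proof is equally terse: it counts $O((mr)^2)$ ``order conflicts between characters'' via Remark~\ref{cor1} and is silent about the multiplicity introduced by the $y$-characters, and its ``$4m$ clauses'' for the first line implicitly assumes $r=2$. In the $r=2$ regime that motivates this section the distinction is immaterial, since $r^2$ and $r^4$ are both constants and every count collapses to $O(m^2)$. But for general $r$, which is how the theorem is stated, a complete argument must either justify why only $O((mr)^2)$ of the generated clauses survive, or settle for the bound $O(m^2r^4)=O((mr^2)^2)$ that the construction actually yields (still polynomial, but weaker than the claim). As written, your amortization step is a genuine gap --- one the paper shares --- rather than the routine bookkeeping you describe.
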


\begin{proof}
	The reduced formula in \eqref{eqMappingb} is in the two conjunctive normal form (CNF) with at most $4m$ clauses in the first line of \eqref{eqMappingb} and a maximum of $O(mr)$ orders per position (Remark~\ref{cor1}), totalling at most $O((mr)^2)$ order conflicts between characters, from the restriction clauses in the reammining of \eqref{eqMappingb}. %
\end{proof}

Although we are no longer in the conditions of Theorem~\ref{time1}, namely because the above satisfiability formulation is not a 2SAT instance, given its unique properties, effective backtracking in accordance with the clauses in the first line of \eqref{eqMappingb}, as well as dedicated conflict pruning principles derived from reamining clauses in \eqref{eqMappingb}, can be considered to develop efficient SAT solvers able to solve the $\mu$OPPM problem.
And, as we will show later, we are not expected to do much better.

\begin{comment}
\begin{theorem}
	%
	\label{time1b}$\mu$OPPM indeterminate strings of equal length is in $O(m^2)$ time when $r$=2.
%
\end{theorem}
\begin{proof}
	Given Theorem~\ref{2satsizeb} and the ability to solve 2SAT tasks linearly in the size of the CNF formula \cite{aspvall1979linear}, the proof of this theorem follows naturally.
\end{proof}
%

%

As linear time algorithms to solve 2SAT problems require linear space (see appendix) and the size of the mapped satisfiability formula $\phi$ is $O(m^2)$ (Theorem~\ref{2satsizeb}), %
the memory complexity of the $\mu$OPPM problmes between indeterminate strings with $r$=2 and equal length is $O(m^2)$.
%
\end{comment}
 %

\subsection{Polynomial time Alternate-$\mu$OPPM}\label{sec:alternateoppm}
In this section, we define Alternate-$\mu$OPPM as the subproblem of $\mu$OPPM where both strings ($x$ and $y$, interchangeable) may have indeterminate characters, but never in the same position; we show that Alternate-$\mu$OPPM is polynomial in both the number of indeterminacies ($r$, which may be different in each position and string) and length of the strings ($m$). To do this, we will present a set of 2SAT clauses, in the form of implications, that can represent every constraint of this problem.
We will first assume that there are no repeated characters within each string and then extend the reduction to handle equalities.

Given a string $x$ and position $i$, we represent the set of indeterminate characters $x[i]$ as the ascending sequence $a_0 \vert ... \vert a_{{r_i}-1}$ where $\forall_j\ a_j\in x[i]$ and $\vert x[i] \vert = r_i $. We will use only $r$ when the context leads to no ambiguities, or to mean the largest possible $r_i$. All of our 2SAT variables will be of the form $g_{a_j}$, meaning that the chosen value $\$x[i]$ is greater than or equal to $a_j$.

\paragraph{Consistency clauses} Here, we describe the clauses that maintain consistency between all the $g$ variables for \textit{individual} positions. We only need to specify that, if we have chosen a value greater than $a_i$, we have also chosen a value greater than $a_{i-1}$, the value immediately below it, i.e.,
\begin{equation*}
\forall_{i \in [1,r-1]} (g_{a_i} \implies g_{a_{i-1}}).
\end{equation*}
This leads to a single clause per indeterminacy, per position, for both pattern and text, and so, at most, $2 m r = O(m r)$ clauses.

\paragraph{Order clauses (Type $1$)} Here, we describe the clauses enforcing the order relation between \textit{each pair} of positions. Given two strings $x$ and $y$, for positions $\alpha$ and $\beta$, if $\$x[\alpha] > \$x[\beta]$, then $\$y[\alpha] > \$y[\beta]$ (and the same for the $<$ relation).

This first set of clauses applies to Type $1$ (see Table~\ref{table:pair_type1}). We only need to find the index (in each string) that separates the cases where $\$x[\alpha] > \$x[\beta]$ from the cases where $\$x[\alpha] < \$x[\beta]$ and add a single constraint expressing it.

\begin{table}[t]
    \center
    \caption{ Type $1$ of pairs we can have in Alternate-$\mu$OPPM. \label{table:pair_type1}}
    \begin{tabular}{|c|c|c|}
    \hline
    $i$ & $\alpha$     & $\beta$ \\ \hline
    $x$ & $a$    & $b_0 \vert ... \vert b_{r_{\beta}-1}$ \\ \hline
    $y$ & $a_0 \vert ... \vert a_{r_{\alpha}-1}$     & $b$  \\ \hline
    \end{tabular}
\end{table}

Let $i$ be the lowest index such that $b_i > a$ and $j$ the lowest index such that $a_j \geq b$, where $a$ and $b$ are as in Table~\ref{table:pair_type1}. Then, we have
\begin{equation*}
\begin{split}
g_{b_i} \implies \lnot g_{a_j},\\
\lnot g_{b_i} \implies g_{a_j}.
\end{split}
\end{equation*}
This leads to two clauses for every pair of positions, and so, $O(m^2)$ clauses.

\paragraph{Order clauses (Type $2$)} 
Finally, we have a second set of clauses that applies to Type $2$ (see Table~\ref{table:pair_type2}). Here, we have the order between $\alpha$ and $\beta$ fixed already by whichever string $x$ or $y$ has no indeterminacies.

\begin{table}[t]
    \center
    \caption{ Type $2$ of pairs we can have in Alternate-$\mu$OPPM. \label{table:pair_type2}}
    \begin{tabular}{|c|c|c|}
    \hline
    $i$ & $\alpha$     & $\beta$ \\ \hline
    $x$ & $a$    &  $b$ \\ \hline
    $y$ & $a_0 \vert ... \vert a_{r_{\alpha}-1}$     & $b_0 \vert ... \vert b_{r_{\beta}-1}$  \\ \hline
    \end{tabular}
\end{table}

If $a>b$, for every index $i$ indexing $b_i$, and let $j$ be the lowest index such that $a_j > b_i$. Then we add
\begin{equation*}
g_{b_i} \implies g_{a_j}.
\end{equation*}
If there is no such $j$, we add instead
\begin{equation*}
\lnot g_{b_i}.
\end{equation*}

Similarly, if $a<b$, for every index $i$ indexing $a_i$, let $j$ be the lowest index such that $a_i<b_j$. Then we add
\begin{equation*}
g_{a_i} \implies g_{b_j}.
\end{equation*}
If there is no such $j$, we add instead
\begin{equation*}
\lnot g_{a_i}.
\end{equation*}

This leads to at most $r$ clauses for every pair of positions, and so $O(r m^2)$ clauses. Because character order is a transitive property, this type of clauses may be reduced to $O(r m)$ using a similar notion to the \textit{Lmax} and \textit{Lmin} sets introduced in Section~\ref{2indsec} to consider only ``adjacent'' (taking adjacent to mean the closest position of the same type) pairs of positions, instead of every pair.

\paragraph{Forcing choice}
With the clauses specified above, we can find coherent solutions to the problem. However, it is possible to satisfy the formula by assigning all possible values for a given variable to false (effectively skipping the position). This has a straightforward solution, given the chosen encoding of the variables. Each 2SAT variable represents a \textit{greater or equal} value in the corresponding OPPM position, the variable corresponding to the lowest value for each position is trivially true, letting us force a value choice with a single added variable. For every position, with variables $g_0,...g_{r_i}$, we add the clause $g_0$, forcing it to be true to satisfy the 2SAT formula.

\paragraph{Extracting solutions}
Finally, we need to extract the solution to the OPPM problem from the 2SAT solution. This is easily done in linear time by sweeping every variable in ascending order, in each position. In each position, with variables $g_0,...,g_{r_i}$, we find the variable at index $j$ such that $g_j$ is true and $g_{j+1}$ is false. The chosen value in the OPPM problem, for the given position, is the value at index $j$.

\paragraph{Dealing with equalities}\label{sec:alternateequal}
We now turn to cases where characters match and show how to adapt the encoding above to equalities.
Let us consider Type II equalities, first, where $a=b$. The easy solution to this is the same as the one presented before. We preprocess the two strings by grouping all the repeats into a single position and intersecting their indeterminacies.
For Type I equalities, we need to add $4$ clauses to each pair. Let $i,j$ be indexes such that $a=b_i$ and $b=a_j$. We add
\begin{equation*}
\begin{split}
g_{b_i} & \implies g_{a_j},\\
g_{a_j} & \implies g_{b_i},\\
\lnot g_{b_{i+1}} & \implies \lnot g_{a_{j+1}},\\
\lnot g_{a_{j+1}} & \implies \lnot g_{b_{i+1}}.\\
\end{split}
\end{equation*}
If only $i$ exists (or $j$), we simply remove $b_i$ (or $a_j$) from the input, as such an assignment could never lead to a valid solution.

\paragraph{Pair incompatibility}
All the clauses described above serve to maintain consistency between pairs. It may happen that a given pair is unsatisfiable by itself, and no clauses would be constructed. These cases can be dealt separately, as pre-processing. If we find a pair that can not be satisfied, we can terminate the program before ending the construction, since there is no solution to the OPPM instance.

\begin{theorem}\label{time-alt}
	The Alternate-$\mu$OPPM can be solved in $O(r m^2)$ time and space.
\end{theorem}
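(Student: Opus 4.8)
The plan is to verify that the reduction to 2SAT described in the preceding paragraphs is correct and that both the construction and the subsequent 2SAT solve fit within the claimed $O(rm^2)$ time and space bounds. First I would argue correctness: collect the consistency clauses, the two types of order clauses, the forcing-choice clause $g_0$ per position, and (after the equality preprocessing) the Type I equality clauses, and show that a satisfying assignment of the $g$-variables is in bijection (via the ``extracting solutions'' sweep) with a pair of valid assignments $\$x$, $\$y$ that op-match. The key observation is that the consistency clauses force, for each position $i$, the truth-values of $g_{a_0},\dots,g_{a_{r_i-1}}$ to form a ``downward-closed prefix'', so they encode exactly one chosen value; the order clauses of Type $1$ and Type $2$ then encode precisely the condition of Lemma~\ref{orderlemma2} restricted to the alternating case (where in each pair of positions at least one side is determinate), and the equality clauses handle the $=$ relations that the inequality-only encoding misses. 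Pair incompatibility detected in preprocessing corresponds exactly to a pair of positions admitting no locally consistent choice, i.e. an unsatisfiable instance.

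Next I would count clauses and variables. There are $O(mr)$ variables $g_{a_j}$ in total (at most $r_i$ per position, summed over $m$ positions). The consistency clauses contribute $O(mr)$; the forcing clauses contribute $O(m)$; the Type $1$ and Type $2$ equality clauses contribute $O(1)$ resp.\ $O(1)$ per pair of positions, hence $O(m^2)$; and the Type $2$ order clauses contribute $O(r)$ per pair, i.e.\ $O(rm^2)$ in the naive form (the remark about reducing to $O(rm)$ via the \textit{Lmax}/\textit{Lmin} trick is an optional improvement but is not needed for the stated bound). Thus the total CNF formula has $O(rm^2)$ clauses and $O(rm)$ variables, and can be built in $O(rm^2)$ time: the dominant cost is, for each of the $O(m^2)$ position pairs, scanning the $O(r)$ indeterminate characters and using the sorted order within a position to locate the separating indices $i$ and $j$ (e.g.\ by binary search or a linear merge), which is $O(r\lg r)$ or $O(r)$ per pair.

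Finally I would invoke the linear-time, linear-space 2SAT algorithm (Aspvall--Plass--Tarjan, cited earlier and recalled in the footnote to Theorem~\ref{time1}): on a formula of size $O(rm^2)$ it runs in $O(rm^2)$ time and space, and the solution-extraction sweep over all $g$-variables is $O(rm)$. Combining the $O(rm^2)$ construction, the $O(rm^2)$ 2SAT solve, and the $O(rm)$ extraction gives the claimed $O(rm^2)$ time and space. The main obstacle I expect is purely in the correctness part: being careful that the ``separating index'' definitions in the Type $1$ and Type $2$ clauses are exactly right at the boundary (strict vs.\ non-strict inequalities, the $j+1$ shifts in the equality clauses, and the ``if no such $j$'' fallback to a unit clause $\lnot g_{b_i}$), and verifying that the equality preprocessing (merging repeated characters and intersecting indeterminacies, and deleting characters for which only one of $i,j$ exists) does not change the set of valid op-matches. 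The counting and the appeal to the 2SAT solver are then routine.
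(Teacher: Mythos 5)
Your proposal is correct and follows essentially the same route as the paper: the paper's proof simply appeals to the 2SAT encoding developed in the section (whose clause count is dominated by the $O(rm^2)$ Type~$2$ order clauses) together with the linear-time, linear-space 2SAT algorithm of Aspvall, Plass and Tarjan, which is exactly your argument. Your write-up merely fills in the clause counting, construction cost, and correctness details that the paper leaves implicit (just make sure to use the linear-merge option, $O(r)$ per pair, rather than per-character binary search, so the construction stays within $O(rm^2)$).
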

\begin{proof}Property resulting from the encoding above and, as in the proof of Theorem~\ref{time1}, given the fact that a 2SAT problem can be solved in linear time~\cite{aspvall1979linear}.
\end{proof}

\subsection{$\mu$OPPM with 3 indeterminacies in both text and pattern is \textbf{NP}-hard}\label{sec:reduction}
In this section, we define $\{3,3\}$-$\mu$OPPM as the subproblem of $\mu$OPPM where both the pattern and the text have indeterminate characters in any position (although at least one position must have at least three indeterminate characters in both pattern and text) and prove it \textbf{NP}-hard (thus proving the same for general $\mu$OPPM). We do this with a direct reduction from 3CNF-SAT, first presenting the construction and then the proof of equivalence between the two instances. The construction is similar to the one by Bose et al. for the permutation matching problem~\cite{permutation}.

\paragraph{Construction}
To ease the description of the construction itself, we start by describing how we represent an instance of 3CNF-SAT. First, we assume that every literal and clause has some ordering. We have a set $V$ of literals, and a set $C$ of clauses. Each clause $c$ is represented by two tuples, $(z_{c,0}, z_{c, 1}, z_{c, 2})$ and $(l_{c,0}, l_{c, 1}, l_{c, 2})$. $z_{c,i} \in \{0, \ldots,\vert V \vert -1\}$ represents the index of literal $i$ of clause $c$; $l_{c,i} \in \{0, 1\}$ represents the value of the literal $i$ in clause $c$, having the value of $0$ for positive literals and $1$ for negative literals. For example, the clause $(v_1 \lor \lnot v_2, \lor v_5)$ would be represented by the two tuples $z=(1, 2, 5)$ and $l=(0, 1, 0)$.

Although the designations of text or pattern are interchangeable in this section, we will use pattern for the simpler string (with less indeterminacies) and text for the more complicated string (with more indeterminacies). We use $p$ and $t$ for the pattern and text, respectively, or $s$ when they are interchangeable.

Both text and pattern have two parts, one representing literals and the other representing clauses. Each literal, and clause, has a single position in each string to represent it, dividing $s$ into $s_V = s[0..\vert V \vert - 1]$ and $s_C = s[\vert V \vert .. \vert V \vert+\vert C \vert - 1]$. In $p_V$, we have a simple sequence of literals given by their indexes, so $p[i] = i+1$, for $i \in \{0, \ldots,\vert V \vert - 1\}$; in $t_V$ we have a similar sequence, but each literal takes one of two variable values to represent an assignment of \textsf{true} or \textsf{false}, so $t[i] =  2\times(i+1)$ or $2\times(i+1)-1$. We choose the larger value to represent the assignment of \textsf{true}. In $s_C$, each position has three indeterminacies, corresponding to the three variables of the clause. In $p_C$, we choose one of the three literals of the respective clause. For clause $c$, with literals $v_1, v_2, v_5$ (regardless of their value being positive or negative), its position in $p$, $p[\vert V \vert + c] = 1 \vert 2 \vert 5$. In $t_C$, as in $p_C$ we choose
one of the literals, but now the \textit{value} of the literal must satisfy the clause. For clause $c$, $(v_1 \lor \lnot v_2 \lor v_5)$, $t[\vert V \vert + c] = 2\times 1 - 0 \mid 2\times 2 - 1 \mid 2\times 5 - 0$ $= 2\vert3\vert10$.
An example of this construction is shown in Table \ref{table:reduction_example}.
\begin{table}[]
    \center
    \caption{ $\mu$OPPM instance corresponding to the 3CNF-SAT formula $(z_1 \lor \lnot z_2 \lor z_3) \land (\lnot z_1 \lor z_2 \lor z_4)$. \label{table:reduction_example}}
    \begin{tabular}{|c|c|c|c|c|c|c|}
    \hline
    $i$ & $0$     & $1$     & $2$     & $3$     & $4$         & $5$         \\ \hline
    Formula & $z_1$    & $z_2$    & $z_3$    & $z_4$    & $c_1$        & $c_2$        \\ \hline
    Pattern & $1$     & $2$     & $3$     & $4$     & $1 \vert 2 \vert 3$ & $1 \vert 2 \vert 4$ \\ \hline
    Text & $1 \vert 2$ & $3 \vert 4$ & $5 \vert 6$ & $7 \vert 8$ & $2 \vert 3 \vert 6$ & $1 \vert 4 \vert 8$ \\ \hline
    \end{tabular}
\end{table}

\begin{lemma}\label{lemma:polyconstr} The construction above takes polynomial time. \end{lemma}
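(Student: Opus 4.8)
The plan is to verify that every component of the construction described above can be produced from a 3CNF-SAT instance using only a polynomial number of elementary operations, and that the resulting strings have polynomial length. Since the statement is a routine ``sanity check'' on the reduction, the argument is essentially a bookkeeping exercise over the two parts ($s_V$ and $s_C$) of the pattern and the text.

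First I would fix notation: let the input 3CNF-SAT formula have $|V|$ variables and $|C|$ clauses, so the input size is $\Theta(|V| + |C|)$ (each clause contributing a constant amount of data, namely the tuples $(z_{c,0}, z_{c,1}, z_{c,2})$ and $(l_{c,0}, l_{c,1}, l_{c,2})$). I would then observe that both $p$ and $t$ have length exactly $|V| + |C|$, which is linear in the input size, and that each character set $s[i]$ has size at most $3$ (size $1$ in $p_V$, size $2$ in $t_V$, size $3$ in $s_C$), and each integer value written is bounded by $2(|V|+1) = O(|V|)$, hence has $O(\lg |V|)$ bits. So the total output size is $O((|V|+|C|)\lg|V|)$, which is polynomial.

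Next I would walk through the construction position by position to bound the time. Building $p_V$ requires writing $p[i] = i+1$ for $i \in \{0,\ldots,|V|-1\}$: $O(|V|)$ steps. Building $t_V$ requires writing $t[i] = 2(i+1)-1 \mid 2(i+1)$: again $O(|V|)$ steps. Building $p_C$ requires, for each clause $c$, reading its literal indices $z_{c,0},z_{c,1},z_{c,2}$ and writing $p[|V|+c] = (z_{c,0}+1)\mid(z_{c,1}+1)\mid(z_{c,2}+1)$: $O(|C|)$ steps total. Building $t_C$ requires, for each clause $c$ and each $i\in\{0,1,2\}$, computing $2(z_{c,i}+1) - l_{c,i}$ and writing the resulting three-element set: $O(|C|)$ steps total. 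Summing, the whole construction runs in $O(|V|+|C|)$ arithmetic operations on $O(\lg|V|)$-bit integers, i.e. polynomial (indeed linear up to logarithmic factors) in the input size, which establishes the lemma.

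I do not anticipate a genuine obstacle here — the only thing to be careful about is making the claim precise rather than hand-wavy, in particular being explicit that (i) the alphabet values stay polynomially bounded so each character occupies polynomially many bits, and (ii) the number of positions and the cardinality of each indeterminate set are polynomially bounded, so that ``polynomial time'' is unambiguous under the standard binary-encoding model. If one wanted to be fully rigorous one would also note that sorting each three-element indeterminate set into ascending order (as required by the paper's convention for writing $a_0\mid\cdots\mid a_{r-1}$) costs only $O(1)$ per position. With these points stated, the proof of Lemma~\ref{lemma:polyconstr} is complete.
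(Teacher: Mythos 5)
Your proposal is correct and follows essentially the same approach as the paper, which simply observes that a single linear scan of the numbered formula suffices to build both strings; your version merely adds explicit bookkeeping about string lengths, set cardinalities, and bit-sizes of the integer values. No issues.
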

\begin{proof}
It is easy to see that, assuming that variables and clauses are numbered, we can simply scan the formula once to construct our two strings in linear time.
\end{proof}

\begin{lemma}\label{lemma:soleq} The initial 3CNF-SAT clause is satisfiable if and only if there is an order-isomorphic match between the two constructed strings. \end{lemma}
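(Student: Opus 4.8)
The plan is to prove the two implications by translating between truth assignments on the one side and, on the other side, a choice of value for every indeterminate position of the two strings, and then checking that order-isomorphism $p\approx t$ holds exactly when every clause is satisfied. First I would recall the relevant structure of the construction: the literal block $p_V$ of the pattern is determinate and strictly increasing ($p[i]=i+1$), only the clause block $p_C$ of the pattern is indeterminate; in the text both blocks are indeterminate, and at a literal position $i$ of $t_V$ the two admissible values are the consecutive integers $\{2(i+1)-1,\,2(i+1)\}$, with the larger value encoding ``variable $z_{i+1}$ is \textsf{true}''. I would also state up front the standing assumption that the three variables occurring in a clause are pairwise distinct (the usual 3CNF normal form), which is what makes the value at a clause position uniquely identifiable.

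For the ($\Rightarrow$) direction, given a satisfying assignment $A$, I would set $\$t[i]=2(i+1)$ if $A(z_{i+1})$ is \textsf{true} and $\$t[i]=2(i+1)-1$ otherwise for each $i<\vert V\vert$, and for each clause $c$ pick any index $k_c$ of a literal of $c$ satisfied by $A$, putting $\$p[\vert V\vert+c]=z_{c,k_c}$ and $\$t[\vert V\vert+c]=2z_{c,k_c}-l_{c,k_c}$. I would then verify order-isomorphism by running through the three types of position pairs. Literal--literal pairs are automatic on both sides because the text values are strictly increasing ($2(i+1)<2(i+2)-1$). For a literal position $i$ versus a clause position $\vert V\vert+c$ the comparison is strict and automatic unless $i=z_{c,k_c}-1$, in which case the pattern values are equal and the text values are equal precisely because $2z_{c,k_c}-l_{c,k_c}$ coincides with the value chosen at position $z_{c,k_c}-1$ exactly when literal $k_c$ is true under $A$; the strict cases follow from the spacing inequalities $2a\le 2b-2<2b-1\le 2b-l$ and $2a+1>2a\ge 2a-l$. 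For two clause positions the only nontrivial case is when they select the same variable index, and there equality in the text holds because that variable has a fixed truth value, so two satisfied literals on it must share polarity, i.e.\ the same $l$.

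For the ($\Leftarrow$) direction, given assignments $\$p,\$t$ witnessing $p\approx t$, I would define $A(z_{i+1})$ to be \textsf{true} iff $\$t[i]=2(i+1)$. For each clause $c$ let $k$ be such that $\$p[\vert V\vert+c]=z_{c,k}$; since $\$p[\vert V\vert+c]=\$p[z_{c,k}-1]$, order-isomorphism forces $\$t[\vert V\vert+c]=\$t[z_{c,k}-1]\in\{2z_{c,k}-1,\,2z_{c,k}\}$. Using distinctness of the clause's variables, the only admissible value at position $\vert V\vert+c$ lying in that two-element set is $2z_{c,k}-l_{c,k}$, so this value equals the value chosen at position $z_{c,k}-1$; unwinding this forces $A$ to make literal $k$ of $c$ true (large value and positive literal, or small value and negative literal). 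Hence every clause is satisfied and $A$ is a model of the formula.

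The step I expect to be the real work is not any single inequality but the exhaustive pair-check in ($\Rightarrow$) together with the ``equality propagates'' argument that is the heart of both directions: that equality of two pattern positions (a clause versus its chosen literal, or two clauses sharing a variable) must be mirrored by equality of the corresponding text positions, and that under the distinct-variables assumption this pins down exactly which indeterminate value the text takes at a clause position. Getting the polarity bookkeeping right there --- that ``positive literal and \textsf{true}'' and ``negative literal and \textsf{false}'' are precisely the two ways the forced text equality can hold --- is the crux, and I would make the distinct-variables hypothesis explicit, since without it the clause block could carry repeated values and this uniqueness would need patching.
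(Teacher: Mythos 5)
Your proof is correct and follows essentially the same route as the paper's: truth assignments correspond to choices of the large/small value in $t_V$, pattern clause positions force equality with (hence consistency of) the chosen literal's text value, and text clause positions admit only values of satisfying literals. You are simply more explicit than the paper, which leaves the pairwise order-isomorphism check in the forward direction and the distinct-variables (standard 3CNF) normalization implicit; making them explicit, as you do, is a strengthening rather than a deviation.
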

\begin{proof} We start by showing how solving the $\mu$OPPM instance solves the initial 3CNF-SAT instance. To solve $\mu$OPPM, we need to choose exactly one value for each position in $p$ and $t$ that leads to two order-isomorphic strings. To extract the solution, we can limit ourselves to look at the initial part of $t$, $t[0, \vert V \vert-1]$, which sets the value of each literal.

First, note that $p$ function is to maintain consistency between the values of literals chosen in $t$. By choosing only literals in $p$, and not their values, we force equality between all such literals. Because of order-isomorphism, this equality must be kept in $t$, forcing a valid solution to use a single value for each literal (since different values match in $p$ but mismatch in $t$). If we choose a literal to be positive/negative at some position in $t$, we force the value of that literal to be positive/negative at every position in $t$.  %Ok, I like this paragraph, I think it is clear and illustrates an important property.

Now, we focus on $t_C$. Every clause has exactly one position in $t_C$, and each of these positions have three choices of value, matching only the three values that satisfy a clause. Because we must choose one value in each position to solve our $\mu$OPPM instance, we must choose one value that satisfies each clause, for every clause.

Putting these two properties together, to solve $\mu$OPPM we must choose a literal value that satisfies each clause \textit{and} those literals must have consistent values. This establishes the equivalence between the solutions of the two instances.

We can easily extract the solution from $\mu$OPPM to 3CNF-SAT by checking whether the values in $t_V$ are even or odd, \textsf{true} or \textsf{false}, respectively. There is a unique solution to 3CNF-SAT given an $\mu$OPPM solution.

To extract the solution from 3CNF-SAT to $\mu$OPPM, we take the values assigned to each variable and choose the respective values in $t_V$. Then, we need to choose values for $p_C$ and $t_C$, which can easily be done by choosing any of the literals that satisfies its respective clause. There may be multiple $\mu$OPPM solutions for a given 3CNF-SAT solution.
\end{proof}

\begin{theorem} \{3,3\}-$\mu$OPPM is \textbf{NP}-hard.
\end{theorem}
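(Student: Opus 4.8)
The plan is to assemble the pieces already in place: Lemma~\ref{lemma:polyconstr} shows the reduction from 3CNF-SAT runs in polynomial time, and Lemma~\ref{lemma:soleq} establishes that the constructed $\mu$OPPM instance admits an order-isomorphic match exactly when the original 3CNF-SAT formula is satisfiable. So the proof of the theorem is essentially a one-line invocation: since 3CNF-SAT is \textbf{NP}-complete, these two lemmas together furnish a polynomial-time many-one reduction from an \textbf{NP}-hard problem to $\{3,3\}$-$\mu$OPPM, whence $\{3,3\}$-$\mu$OPPM is \textbf{NP}-hard.

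First I would state explicitly that the instance produced by the Construction is a legal $\{3,3\}$-$\mu$OPPM instance: the literal positions of the pattern are determinate (singletons $p[i]=i+1$) and the literal positions of the text have two indeterminate characters each, while every clause position of both pattern and text carries exactly three indeterminate characters; hence, provided the formula has at least one clause, there is a position (namely any clause position) with three indeterminacies in both strings, meeting the definitional requirement of $\{3,3\}$-$\mu$OPPM. (If one wishes to be scrupulous about the degenerate case of a formula with no clauses, that instance is trivially satisfiable and can be mapped to a fixed trivially-matching instance; but the standard formulation of 3CNF-SAT assumes at least one clause, so I would not belabor this.)

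Then I would observe that the whole pattern $p$ has length $|V|+|C|$ and the whole text $t$ the same length, so ``finding all substrings of $t$ that op-match $p$'' here reduces to the single query whether $p \approx t$, which is exactly the predicate analyzed in Lemma~\ref{lemma:soleq}. Chaining: a 3CNF-SAT formula $\varphi$ is satisfiable $\iff$ (by Lemma~\ref{lemma:soleq}) the constructed strings are order-isomorphic $\iff$ the decision version of $\{3,3\}$-$\mu$OPPM answers ``yes'' on the constructed instance, and by Lemma~\ref{lemma:polyconstr} the construction is computable in polynomial (indeed linear) time. Therefore $3\text{CNF-SAT} \le_p \{3,3\}\text{-}\mu\text{OPPM}$, and since 3CNF-SAT is \textbf{NP}-hard, so is $\{3,3\}$-$\mu$OPPM. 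As an immediate consequence, general $\mu$OPPM is \textbf{NP}-hard as well, since $\{3,3\}$-$\mu$OPPM is a special case of it.

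There is really no substantive obstacle left at the level of the theorem itself — all the work has been front-loaded into the construction and into Lemma~\ref{lemma:soleq}, whose correctness argument (the pattern's determinate literal positions forcing all occurrences of a literal to take a common value in the text, combined with each clause position of the text offering precisely the three satisfying literal-values) is the genuine content. The only thing to be careful about in writing the theorem's proof is not to overclaim: the reduction is many-one/Karp, $\mu$OPPM is shown \textbf{NP}-hard (membership in \textbf{NP}, and hence \textbf{NP}-completeness, is a separate question not claimed here), and the reduction targets the decision version of the problem, which is the appropriate notion for hardness.
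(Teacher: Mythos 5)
Your proof is correct and follows essentially the same route as the paper: it combines Lemma~\ref{lemma:polyconstr} (polynomial-time construction) and Lemma~\ref{lemma:soleq} (equivalence of satisfiability and order-isomorphic matching) into a Karp reduction from 3CNF-SAT. Your added checks that the constructed instance genuinely has three indeterminacies at clause positions in both strings, and that equal lengths reduce matching to the single query $p\approx t$, are sensible elaborations of the same argument.
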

\begin{proof} Using Lemmas~\ref{lemma:polyconstr} and~\ref{lemma:soleq} we show that 3CNF-SAT $\leq _p$ $\{3,3\}$-$\mu$OPPM by constructing an instance of $\mu$OPPM in polynomial time. The solutions can also be retrieved and translated in polynomial time.
\end{proof}

\begin{theorem} $\mu$OPPM is \textbf{NP}-hard.\end{theorem}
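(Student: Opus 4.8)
The plan is to observe that \{3,3\}-$\mu$OPPM is, by definition, a restriction of the general $\mu$OPPM problem: every instance of \{3,3\}-$\mu$OPPM (a pattern and a text in which both strings are allowed indeterminate characters in any position, subject to the constraint that at least one position carries at least three indeterminacies in each string) is already a legal instance of $\mu$OPPM, and the answer to the two decision problems coincides on such instances. Hence the identity map is a polynomial-time (indeed, trivial) many-one reduction from \{3,3\}-$\mu$OPPM to $\mu$OPPM.

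Concretely, I would first recall that $\le_p$-hardness is preserved under restriction-to-superproblem: if $A \le_p B$ and $B$ is a subproblem of $C$ with the property that $A$'s instances (after the reduction) land in $C$ with the same yes/no answer, then $A \le_p C$. Here we take $A = \text{3CNF-SAT}$, $B = \text{\{3,3\}-}\mu\text{OPPM}$, and $C = \mu\text{OPPM}$. By the preceding theorem (via Lemmas~\ref{lemma:polyconstr} and~\ref{lemma:soleq}), 3CNF-SAT $\le_p$ \{3,3\}-$\mu$OPPM; composing with the trivial inclusion \{3,3\}-$\mu$OPPM $\hookrightarrow$ $\mu$OPPM, which is computable in polynomial (linear) time since it does nothing, gives 3CNF-SAT $\le_p$ $\mu$OPPM. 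Polynomial-time reductions compose, so $\mu$OPPM is \textbf{NP}-hard.

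There is essentially no obstacle here; the only point worth a sentence of care is making explicit that the construction of Table~\ref{table:reduction_example} outputs a pair of strings that is a valid $\mu$OPPM instance (it manifestly is, since $\mu$OPPM places no structural restrictions on where indeterminacies may occur), so no re-encoding is needed when passing from the subproblem to the general problem. Thus the statement follows immediately as a corollary of the previous theorem.
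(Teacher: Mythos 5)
Your proposal is correct and matches the paper's own argument: the paper likewise notes that \{3,3\}-$\mu$OPPM is a particular case of $\mu$OPPM and derives \textbf{NP}-hardness of the general problem directly from the hardness of the restricted one. Your extra care in spelling out the composition of the 3CNF-SAT reduction with the trivial inclusion is fine but adds nothing beyond the paper's one-line proof.
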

\begin{proof} Since $\{3,3\}$-$\mu$OPPM is a particular case of $\mu$OPPM, and it is \textbf{NP}-hard, then OPPM is \textbf{NP}-hard.
\end{proof}

\section{\mbox{Polynomial time $\mu$OPPM}}
\vskip -0.9cm\textcolor{white}{.}

\begin{lemma}
	Given a pattern string of length $m$ and a text string of length $n$, one being indeterminate, the $\mu$OPPM problem can be solved in $O(nmr\lg r)$ time.
	\label{uOPPMbound}
\end{lemma}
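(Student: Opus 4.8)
The plan is to reduce the full-length $\mu$OPPM problem (pattern of length $m$, text of length $n$, exactly one of them indeterminate) to $n-m+1$ independent instances of the equal-length problem already solved in Theorem~\ref{timeLIS}, and then simply bound the total cost. First I would observe that, by the definition of op-matching, a position $i$ in $t$ is a valid match exactly when $p \approx t[i-m+1..i]$, i.e. when the indeterminate-versus-determinate equal-length comparison between $p$ and the length-$m$ window $t[i-m+1..i]$ succeeds. Whichever of $p$ or the window is the indeterminate string, this is precisely the setting of Theorem~\ref{timeLIS}: two strings of equal length $m$, one indeterminate with at most $r$ characters per position. Hence each window can be tested in $O(mr\lg r)$ time.

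Next I would note that there are exactly $n-m+1 = O(n)$ candidate windows $t[i-m+1..i]$ for $i = m-1, \ldots, n-1$, and that the outcome at one window does not affect another, so we may run the Algorithm~\ref{algo1} test independently on each. Summing the per-window cost gives $O(n) \cdot O(mr\lg r) = O(nmr\lg r)$ total time, which is the claimed bound. If $p$ is the indeterminate one, the permutation $\pi$ and the \textit{Leq} structure of the determinate window must in principle be recomputed per window, but this is only $O(m\lg m) = O(mr\lg r)$ per window and is absorbed into the bound; if $t$ is the indeterminate one, $\pi$ for the determinate $p$ is computed once and reused, which only helps.

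I do not expect a genuine obstacle here — this lemma is essentially a windowed wrapper around Theorem~\ref{timeLIS} — but the one point that needs a sentence of care is making explicit that ``one being indeterminate'' covers both orientations (indeterminate pattern against determinate text, or determinate pattern against indeterminate text), and that in each case a single length-$m$ window comparison is exactly an instance handled by Algorithm~\ref{algo1}; the asymmetry in which string is permuted/preprocessed changes constants but not the asymptotic bound. A secondary remark worth including is that this is the naive-but-correct baseline that the later filtration/linear-space discussion in this section will improve in the average case, so the proof should just establish correctness and the worst-case time cleanly rather than optimize.

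\begin{proof}
A text position $i$ (with $m-1 \le i \le n-1$) is an op-match precisely when $p \approx t[i-m+1..i]$, a comparison between two strings of equal length $m$, exactly one of which is indeterminate with at most $r$ uncertain characters per position. By Theorem~\ref{timeLIS}, each such comparison is decided in $O(mr\lg r)$ time via Algorithm~\ref{algo1} (when the determinate string is the text window, its permutation $\pi$ is recomputed per window in $O(m\lg m)$ time, which is dominated; when the pattern is determinate, $\pi$ is computed once and reused). There are $n-m+1 = O(n)$ windows, and the test at one window is independent of the others, so running the procedure on all windows takes $O(n)\cdot O(mr\lg r) = O(nmr\lg r)$ time in total.
\end{proof}
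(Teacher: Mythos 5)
Your proof is correct and takes essentially the same route as the paper: the paper's own argument is exactly this windowed reduction, invoking Theorem~\ref{timeLIS} on each of the $n-m+1$ length-$m$ comparisons to get $O(nmr\lg r)$ total time. Your extra remarks on which string is indeterminate and on recomputing $\pi$ per window only add detail the paper leaves implicit; they do not change the argument.
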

\begin{proof}
	From Theorem~\ref{timeLIS}, %
	verifying if two strings of length $m$ op-match can be done in $O(mr\lg r)$ time (indetermination in one string) %
	since at most $n-m+1$ verifications need to be performed.
\end{proof}
\vskip -0.1cm

Lemma~\ref{uOPPMbound} confirms that the $\mu$OPPM problem with one indeterminate strings %
is in class \textbf{P}.
This  lemma further triggers the research question ``\textit{Is $O(nmr\lg r)$ %
	a tight bound to solve the $\mu$OPPM?}'', here left as an open research question. 

Irrespectively of the answer, the analysis of the average complexity is of complementary relevance. State-of-the-art research on the exact OPPM problem shows that the average performance of algorithms in $O(nm)$ time can outperform linear time algorithms \cite{Chhabra,cantone2015efficient,chhabra2015alternative}. 

Motivated by the evidence gathered by these works, we suggest the use of filtration procedures to improve the average complexity of the proposed $\mu$OPPM algorithm while still preserving its complexity bounds. 
A filtration procedure encodes the input pattern and text, and relies on this encoding to efficiently find positions in the text with a high likelihood to op-match a given pattern. Despite the diversity of string encodings, simplistic binary encodings are considered to be the state-of-the-art in OPPM research \cite{Chhabra,cantone2015efficient}. In accordance with Chhabra et al. \cite{Chhabra}, a pattern $p$ can be mapped into a binary string $p'$ expressing increases (1), equalities (0) and decreases (0) between subsequent positions. By searching for exact pattern matches of $p'$ in an analogously transformed text string $t'$, we guarantee that the verification of whether $p[0..m-1]$ and $t[i..i+m-1]$ orders are preserved is only performed when exact binary matches occur. Illustrating, given $p=(3,1,2,4)$ and $t=(2,4,3,5,7,1,4,8)$, then $p'=(1,0,1,1)$ and $t'=(1,1,0,1,1,0,1,1)$, revealing two matches $t'[1..4]$ and $t'[4..7]$: one spurious match $t[1..4]$ and one true match $t[4..7]$.

When handling indeterminate strings the concept of increase, equality and decrease needs to be redefined. Given an indeterminate string $x$, consider
$x'[i]=1$ if $max(x[i])<min(x[i+1])$, $x'[i]=0$ if $min(x[i])\ge max(x[i+1])$, and $x'[i]=\ast$ otherwise.
Under this encoding, the pattern matching problem is identical under the additional guard that a character in $p'$ always matches a do not care position, $t'[i]=\ast$, and vice-versa. Illustrating, given $p=(6,2|3,5)$ and $t=(3|4,5,6|8,6|7,3,5,4|6,7|8,4)$, then $p'=(0,1)$ and $t'=(11\ast01\ast10)$, leading to one true match $t[3..5]$ -- e.g. $\$t[3..5]=(6,3,5)$ -- and one spurious match $t[5..7]$. 
Exact pattern matching algorithms, such as Knuth-Morris-Pratt and Boyer-Moore, can be adapted to consider do not care positions while preserving complexity bounds~\cite{kpm,boyermoore}. 

The properties of the proposed encoding guarantee that the exact matches of $p'$ in $t'$ cannot skip any op-match of $p$ in $t$. Thus, when combining the premises of Lemma~\ref{uOPPMbound} with the previous observation, we guarantee that the computed $\mu$OPPM solution is sound.

The application of this simple filtration procedure prevents the recurring $O(mr\lg r)$ %
verifications $n-m+1$ times. Instead, the complexity of the proposed method to solve the $\mu$OPPM problem becomes $O(dmr\lg r+n)$ (when one string is indeterminate) %
where $d$ is the number of exact matches ($d\ll n$). %
According to previous work on exact OPPM with filtration procedures \cite{Chhabra}, SBNDM2 and SBNDM4 algorithms \cite{vdurian2010improving} (Boyer-Moore variants) %
were suggested to match binary encodings. In the presence of small patterns, Fast Shift-Or (FSO) \cite{fredriksson2005practical} can be alternatively applied \cite{Chhabra}.

A given string text can be read and encoded incrementally from the standard input as needed to perform $\mu$OPPM, thus requiring $O(mr)$ space. When filtration procedures are considered, the aforementioned algorithms for exact pattern matching require $O(m)$ space \cite{Chhabra}, thus $\mu$OPPM space requirements are bound by substring verifications (Section~\ref{secSol1}): $O(mr)$ space when one string is indeterminate and $O((mr)^2)$ when indetermination is considered on both strings. %

\section{Open problem}\label{sec:open_problem}
We can look at the $\mu$OPPM by the number and position of the indeterminate characters. We have shown that, for any number of indeterminacies, $\mu$OPPM has a polynomial-time algorithm for indeterminate characters in a single string (Section \ref{uOPPM1}), or in both strings, but never in both strings at the same position (Section \ref{sec:alternateoppm}). For indeterminate characters in both strings at the same position, we have also shown that for \textit{at least three} indeterminacies (at select positions), the problem in \textbf{NP}-hard (Section \ref{sec:reduction}). 

There is a gap in between these two groups, however, for the strings where there are \textit{at most two} indeterminate characters in both strings at the same position. It remains open whether or not this problem is \textbf{NP}-hard. Given that our reduction from Section \ref{sec:reduction} uses three indeterminate character in both strings, it also remains open whether the problem with two indeterminate characters in one string and three in the other (at the same position) is \textbf{NP}-hard.

Following the pattern-avoidance precedent by Guillemot and Vialette~\cite{permutation_avoid} for the related problem of permutation matching, we note that, for the case of $\mu$OPPM with \textit{at most two} indeterminate characters (both strings, same position), there is a straightforward encoding in 2SAT for $(1\vert3, 2\vert4)$-avoiding strings, here taken to mean that, in a single string, for the pair of positions $(i, j)$, the rank of the characters (only for the pair in question) is not $1\vert3$ in $i$ and $2\vert4$ in $j$ (with $i$ and let $j$ being interchangeable). The full problem, however, remains open.

\section{Concluding remark}
This work addressed the relevant yet scarcely studied problem of finding order-preserving pattern matches on indeterminate strings ($\mu$OPPM). We showed that the problem has a linear time and space solution when one string is indeterminate.
In addition, the $\mu$OPPM problem (when both strings are indeterminate) was mapped into a satisfiability formula of polynomial size and two simple types of clauses in order to study efficient solvers for the $\mu$OPPM problem.
Moreover the $\mu$OPPM problem was shown to be \textbf{NP}-hard in general.
Finally, we showed that solvers of the $\mu$OPPM problem can be boosted %
in the presence of filtration procedures and we identified a still open problem in what concerns
the computational complexity of the $\mu$OPPM problem when restricted to at most two indeterminate
characters in both strings at the same position.

\paragraph{Acknowledgments}
This work was developed in the context of a secondment granted by the BIRDS MASC RISE project funded in part by EU H2020 research and innovation programme
under the Marie Skłodowska-Curie grant agreement no.\textsf{690941}. This work was further supported by national funds through Fundação para a Ciência e Tecnologia (FCT), namely under projects \textsf{PTDC/CCI-BIO/29676/2017}, \textsf{TUBITAK/0004/2014}, \textsf{SAICTPAC/0021/2015}, and \textsf{UID/CEC/50021/2019}.
\begin{comment}
In this work, we assumed that a generic pattern is expressively defined by a sequence of convex ranges of characters. When a generic pattern is given by an indeterminate string where convexity cannot be guaranteed on each position, the pattern can be decomposed...
\end{comment}
 %

\bibliography{references}
\end{document}